\newtheorem{theo}{Theorem}[section]
\newtheorem{lemm}[theo]{Lemma}
\newtheorem{coro}[theo]{Corollary}
\newtheorem{prop}[theo]{Proposition}
\newtheorem{defi}[theo]{Definition}
\newenvironment{proof}{\textbf{Proof:}}{$\blacksquare$}
\newcommand{\EQ}{\begin{equation}}
\newcommand{\EN}{\end{equation}}
\newcommand{\w}{\mbox{wt}}
\newcommand{\Aut}{\mbox{\rm Aut}}
\newcommand{\wt}{\mbox{\rm wt}}
\newcommand{\bh}{{\bf h}}
\newcommand{\by}{{\bf y}}
\newcommand{\bx}{{\bf x}}
\newcommand{\bz}{{\bf z}}
\newcommand{\bv}{{\bf v}}
\newcommand{\bu}{{\bf u}}
\newcommand{\F}{\mathbb{F}}
\title{On linear completely regular
codes with covering radius $\rho=1$. Construction and
classification\thanks{This work has been partially supported by the Spanish MICINN Grants MTM2006-03250, TSI2006-14005-C02-01 and PCI2006-A7-0616, the AGAUR grant 2008PIV00050 and also by the Russian fund of fundamental researches 06-01-00226.}}
\author{J. Borges, J. Rif\`{a}\thanks{J. Borges and J. Rif\`{a}  are with the
Department of Information and Communications Engineering,
                             Universitat Aut\`{o}noma de Barcelona,
                             08193-Bellaterra, Spain
                             (email:~\{joaquim.borges,josep.rifa\}@autonoma.edu).}, V.A. Zinoviev\thanks{V.A. Zinoviev is with the Institute for Problems of Information
Transmission of the Russian Academy of Sciences, Bol'shoi Karetnyi
per. 19, GSP-4, Moscow, 127994, Russia (e-mail:\, zinov@iitp.ru).}}
\begin{document}

\maketitle

\begin{abstract}
Completely regular codes with covering radius $\rho=1$ must have
minimum distance $d\leq 3$. For $d=3$, such codes are perfect and
their parameters are well known. In this paper, the cases $d=1$ and
$d=2$ are studied and completely characterized when the codes are
linear. Moreover, it is proven that all these codes are completely
transitive.
\textbf{Keywords:}
Linear completely regular codes, completely transitive codes, covering radius.
\end{abstract}

\section{Introduction and Background}

Let $\F_q=GF(q)$ be the Galois Field with $q$ elements, where $q$ is
a prime power. $\F^n_q$ denotes the $n$-dimensional vector space
over $\F_q$. The all-zero vector in $\F^n_q$ is denoted by ${\mathbf
0}$. Let $\w(\bv)$ denote the {\em Hamming weight} of a vector $\bv
\in \F_q$ (i.e. the number of its nonzero positions), and $d(\bv,
\bu)=\w(\bv-\bu)$ denotes the {\em Hamming distance} between two
vectors $\bv$ and $\bu$. Given $\bv\in\F^n_q$, denote by $supp(\bv)$
the {\em support} of the vector $\bv$, that is, the set of
coordinate positions where $\bv$ has nonzero entries. We say that a
vector $\bu=(u_1,\ldots,u_n)\in\F^n_q$ {\em covers} a vector
$\bv=(v_1,\ldots,v_n)\in\F^n_q$ if $v_i\neq 0$ implies $v_i=u_i$.

A $q$-ary {\em code} $C$ of length $n$ is a subset of $\F^n_q$. If $C$ is a
$k$-dimensional linear subspace of $\F^n_q$, then $C$ is a {\em linear}
code, denoted by $[n,k,d]_q$, where $d$ is the {\em minimum distance} between any
pair of codewords.

Let $C$ be a $q$-ary code with minimum distance $d$, the {\em packing radius} of $C$ is
$$
e=\left\lfloor\frac{d-1}{2}\right\rfloor.
$$
Such a code is said to be an $e$-{\em error-correcting} code.

Given any
vector $\bv \in \F^n_q$, its {\em distance} to the code
$C$ is
\[
d(\bv,C)=\min_{\bx \in C}\{ d(\bv, \bx)\}
\]
and the {\em covering radius} of the code $C$ is
\[
\rho=\max_{\bv \in \F^n_q} \{d(\bv, C)\}.
\]

Clearly $e\leq\rho$ and $C$ is said to be {\em perfect} when $e=\rho$.

For any $\bx\in \F^n_q$, let $~D=C+\bx~$ be a {\em translate} of  $C$. The {\em weight}
$\w(D)$ of $D$ is the minimum weight of the codewords of $D$.

\begin{defi}\label{de:1.1}
A $q$-ary code $C$
is called {\em completely regular} if the weight
distribution of any translate $D$ of $C$
is uniquely defined by the weight of $D$.
\end{defi}

Equivalently, $C$ is completely regular if for all $\bx\in \F^n_q$ such that $d(\bx,C)=t$, the number of codewords
at distance $i$ ($0\leq i\leq n$) from $\bx$ depends only on $t$ and $i$.

Given a code $C$ with covering radius $\rho$, let $C(\rho)$ be the set of vectors at distance $\rho$
from $C$. The next statement can be found in \cite{Neum} for binary codes. For the non-binary case
it can be proven in similar way.

\begin{lemm}\label{CoveringSet}
If a $q$-ary code $C$ is completely regular with covering radius $\rho$, then $C(\rho)$
is also completely regular.
\end{lemm}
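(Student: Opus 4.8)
The plan is to show that the set $C(\rho)$ of vectors at maximal distance $\rho$ from $C$ inherits complete regularity by analyzing how its translates sit relative to the translates of $C$.

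First I would set up the relationship between translates. Since $C$ is completely regular with covering radius $\rho$, the space $\F_q^n$ partitions into the classes $C(t) = \{\bv : d(\bv,C)=t\}$ for $0 \le t \le \rho$, and the defining property says the weight distribution of a coset $\bv + C$ depends only on $t = d(\bv,C)$. I would first record that $C(\rho)$ is nonempty (by definition of $\rho$) and then examine an arbitrary translate $C(\rho) + \bx$. The key observation is that $C(\rho)$ is a union of cosets of $C$ only in the linear case; for the general statement one works directly with the distance partition. The natural strategy is to prove that $C(\rho)$ is completely regular by verifying the equivalent characterization: for every $\bx$ with $d(\bx, C(\rho)) = s$, the number of elements of $C(\rho)$ at each distance $i$ from $\bx$ depends only on $s$ and $i$.

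The central step is to relate the distance partition of $C(\rho)$ to that of $C$. I would argue that the distance classes $C(t)$ themselves form an \emph{equitable} (distance-regular) partition of the Hamming graph on $\F_q^n$: complete regularity of $C$ is precisely the statement that the number of neighbors a vector in $C(t)$ has in each class $C(t')$ depends only on $t$ and $t'$, giving a quotient matrix (a tridiagonal intersection array). From this equitable partition one deduces that the outer distance classes of $C(\rho)$ coincide, up to reindexing $i \mapsto \rho - i$ or a similar reflection, with the classes $C(t)$. Concretely, I expect that $d(\bv, C(\rho))$ is a function of $d(\bv, C)$ alone, so that the distance partition induced by $C(\rho)$ refines to the same partition $\{C(t)\}$, and the intersection numbers for $C(\rho)$ are obtained by reversing the intersection array of $C$.

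The main obstacle will be establishing rigorously that the partition $\{C(0), C(1), \dots, C(\rho)\}$ is equitable and that the distance-to-$C(\rho)$ function is constant on each $C(t)$; this is where complete regularity of $C$ must be used in full force rather than just on individual cosets. Once that is in hand, the weight distribution of any translate of $C(\rho)$ is determined by which class $C(t)$ the translate representative lies in, and hence by $d(\cdot, C(\rho))$, which is exactly the completely regular property for $C(\rho)$. I would carry this out by fixing $\bv \in C(t)$, counting its neighbors in each $C(t')$ via the intersection array of $C$, and reading off the reversed array; the routine verification of the tridiagonal structure I would leave to the reader or cite from the binary case in \cite{Neum}.
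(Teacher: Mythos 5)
Your route --- pass to the distance partition $\{C(0),C(1),\ldots,C(\rho)\}$ of $C$, show it is equitable, observe that it is also (in reverse order) the distance partition of $C(\rho)$, and conclude --- is the standard one, and it is essentially the argument in the source \cite{Neum} to which the paper defers; the paper itself prints no proof of this lemma at all. But as written, your proposal has a genuine gap at its center. You assert that complete regularity of $C$ ``is precisely'' the equitability statement (each vertex of $C(t)$ has a number of neighbors in $C(t')$ depending only on $t$ and $t'$). Under Definition~\ref{de:1.1} this is not the definition: complete regularity there concerns the number of codewords at \emph{every} distance $i$ from a point, not merely distance $1$. The equivalence of the two conditions is a theorem of Neumaier, and your argument needs it in \emph{both} directions: once to get from complete regularity of $C$ to equitability of the partition, and again --- this is the step your closing sentence treats as immediate --- to get from equitability of the reversed partition back to complete regularity of $C(\rho)$ in the sense of Definition~\ref{de:1.1}. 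Neither direction is tautological. For the first, one can use that the distinct rows $\bar{B}_0,\ldots,\bar{B}_\rho$ of the outer distribution matrix $B$ of $C$ are linearly independent (row $\bar{B}_t$ vanishes in positions $<t$ and is positive in position $t$), together with the identity $A_1B=BT$, where $A_1$ is the adjacency matrix of the Hamming graph and $T$ its tridiagonal intersection matrix. For the second, one needs that every distance-$i$ adjacency matrix $A_i$ of the Hamming scheme is a polynomial in $A_1$, so that the span of the characteristic vectors of the cells, being $A_1$-invariant by equitability, contains $A_i\chi_{C(\rho)}$ for every $i$; this is what makes the number of points of $C(\rho)$ at distance $i$ from $\bx$ constant on each cell.

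The other missing step is the identity $d(\bv,C(\rho))=\rho-d(\bv,C)$, which you only say you ``expect.'' The inequality $d(\bv,C(\rho))\geq \rho-d(\bv,C)$ is the triangle inequality, but the reverse inequality requires producing, from each $\bv\in C(t)$ with $t<\rho$, a neighbor in $C(t+1)$ and iterating up to $C(\rho)$. That every vertex of $C(t)$ has such a neighbor is not obvious; what is easy is the opposite statement, that every $\bw\in C(t+1)$ has a neighbor in $C(t)$ (alter one coordinate of $\bw$ toward a nearest codeword). This makes the number of edges between $C(t)$ and $C(t+1)$ positive, and only then does equitability transfer that positivity to every vertex of $C(t)$. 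With these two points supplied, your outline becomes a correct and self-contained proof --- indeed more informative than the paper, which handles the lemma by citation only.
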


A {\em linear automorphism} of $\F^n_q$ is a coordinate permutation together with a product by
a nonzero scalar value at each position. Such an automorphism $\sigma$ can be represented by a
$n\times n$ monomial matrix $M$ such that $\bx M=\sigma(\bx)$, for all $\bx\in\F^n_q$. From now on,
if $C\subseteq \F^n_q$ is a linear code, the {\em full automorphism group} of $C$, denoted $\Aut(C)$,
is the group of linear automorphisms of $\F^n_q$ that leaves $C$ invariant. We say that
$\Aut(C)$ is {\em transitive} if it is transitive when acts on the set of weight one vectors of $\F^n_q$.

\begin{lemm}\label{Equivalents}
Let $C,D\subseteq\F^n_q$ be two linear equivalent codes, i.e. there
is a linear automorphism $\sigma$ of $\F^n_q$ such that
$D=\sigma(C)$. Then $\Aut(C)$ is transitive if and only if $\Aut(D)$
is transitive.
\end{lemm}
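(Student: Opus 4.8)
The plan is to prove the statement directly by transporting the transitivity of $\Aut(C)$ to $\Aut(D)$ via the equivalence $\sigma$. Since $D=\sigma(C)$, I would first establish the relationship between the two automorphism groups, namely that $\Aut(D)=\sigma\,\Aut(C)\,\sigma^{-1}$. This is the natural conjugation isomorphism: if $\tau\in\Aut(C)$, then $\sigma\tau\sigma^{-1}$ leaves $D$ invariant because $(\sigma\tau\sigma^{-1})(D)=\sigma\tau(C)=\sigma(C)=D$, using that $\tau(C)=C$. Conversely, any element fixing $D$ conjugates back to an element fixing $C$. Both $\sigma$ and the elements of these groups are linear automorphisms of $\F^n_q$ (monomial matrices), so all compositions remain linear automorphisms and the algebra is clean.

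Next I would reduce transitivity to an orbit condition. The key observation is that a linear automorphism $\sigma$, being a composition of a coordinate permutation and scalar multiplications, maps weight-one vectors to weight-one vectors bijectively; that is, $\sigma$ permutes the set $W$ of weight-one vectors of $\F^n_q$. The plan is then to show: $\Aut(C)$ acts transitively on $W$ if and only if the conjugate group $\sigma\,\Aut(C)\,\sigma^{-1}$ acts transitively on $\sigma(W)=W$. Given two weight-one vectors $\bu,\bv\in W$, I would pull them back through $\sigma^{-1}$ to weight-one vectors $\sigma^{-1}(\bu),\sigma^{-1}(\bv)$, apply transitivity of $\Aut(C)$ to find $\tau\in\Aut(C)$ with $\tau(\sigma^{-1}(\bu))=\sigma^{-1}(\bv)$, and then conjugate to get $(\sigma\tau\sigma^{-1})(\bu)=\bv$ with $\sigma\tau\sigma^{-1}\in\Aut(D)$. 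The reverse direction is symmetric, using $\sigma^{-1}$ in place of $\sigma$, or simply noting that the ``if and only if'' is symmetric in $C$ and $D$ since the equivalence relation is symmetric.

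The main obstacle, though a mild one, is handling the scalar part of the monomial maps carefully: a linear automorphism $\sigma$ sends a weight-one vector supported on position $i$ with value $a$ to a weight-one vector supported on position $\pi(i)$ with value $\lambda_{\pi(i)}a$ for the underlying permutation $\pi$ and scalars $\lambda_j$. I must confirm that this preserves the property of being weight-one (it does, since a monomial map can neither create nor destroy a nonzero entry) and that the action on $W$ is genuinely a well-defined permutation, so that $\sigma(W)=W$ as claimed. Once this bijectivity of $\sigma$ on $W$ is in hand, the transitivity transfer is a formal orbit-chasing argument with no further difficulty. The whole proof is therefore short: establish the conjugation relation $\Aut(D)=\sigma\,\Aut(C)\,\sigma^{-1}$, observe that $\sigma$ permutes $W$, and combine these to carry any transitivity witness from one group to the other.
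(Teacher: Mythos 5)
Your proposal is correct and follows essentially the same route as the paper's proof: conjugating by $\sigma$ to carry a transitivity witness $\tau\in\Aut(C)$ with $\tau(\sigma^{-1}(\bx))=\sigma^{-1}(\by)$ into $\sigma\tau\sigma^{-1}\in\Aut(D)$, and handling the converse by symmetry. Your explicit verification that monomial maps permute the weight-one vectors is a detail the paper leaves implicit, but it is the same argument.
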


\begin{proof}
Notice that for all $g\in \Aut(C)$, $\sigma g \sigma^{-1}\in \Aut(D)$.
Assume that $\Aut(C)$ is transitive. Let $\bx$ and $\by$ be weight one vectors, we
want to find $\delta\in \Aut(D)$ such that $\delta(\bx)=\by$. Let
$\tau\in \Aut(C)$ such that
$\tau(\sigma^{-1}(\bx))=\sigma^{-1}(\by)$, then
$\sigma\tau\sigma^{-1}(\bx)=\by$ and $\sigma\tau\sigma^{-1}\in
\Aut(D)$. The statement then follows reversing the roles of $C$ and
$D$.
\end{proof}

For a linear code $C$, the group $\Aut(C)$ acts on the set of cosets of $C$ in the
following way: for all $\phi \in \Aut(C)$ and for every vector $\bv
\in \F^n_q$ we have $\phi(\bv + C) = \phi(\bv) + C$.

In \cite{giudici} and \cite{Sole} the following definition has
been given for the case of linear codes.

\begin{defi}
Let $C$ be a $q$-ary linear code with covering radius
$\rho$. Then $C$ is completely transitive if $\Aut(C)$ has
$\rho +1$ orbits when acts on the cosets of $C$.
\end{defi}

Since two cosets in the same orbit should have the same weight
distribution, it is clear that any completely transitive code is
completely regular. The following statement can be generalized for the case
$\rho> 1$ replacing transitivity by $\rho$-homogeneity \cite{Sole}. Here, we are only
interested in the case $\rho=1$.

\begin{lemm}\label{TransitiveCT}
Let $C$ be a $[n,k,d]_q$ code with covering radius $\rho=1$. If
$\Aut(C)$ is transitive, then $C$ is completely transitive.
\end{lemm}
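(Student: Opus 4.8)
The plan is to show that if $\Aut(C)$ is transitive, then it has exactly $\rho + 1 = 2$ orbits when acting on the cosets of $C$. Since $\rho = 1$, there are exactly two possible coset weights: weight $0$ (the code $C$ itself) and weight $1$ (every other coset, since any vector is at distance at most $1$ from $C$). The trivial coset $C$ is fixed by every automorphism, so $\{C\}$ is one orbit by itself. The entire task thus reduces to proving that $\Aut(C)$ acts transitively on the set of all weight-one cosets, i.e. on the nonzero cosets.

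Let me think about what a weight-one coset looks like. If $D = \bv + C$ has weight $1$, then $D$ contains a vector $\bw$ of weight $1$, so $D = \bw + C$ where $\bw$ has a single nonzero entry. First I would observe that every nonzero coset is represented by such a weight-one vector: since $\rho = 1$, every $\bv \in \F^n_q$ satisfies $d(\bv, C) \le 1$, so either $\bv \in C$ (weight-zero coset) or $d(\bv, C) = 1$, meaning $\bv + C$ contains a weight-one vector. Hence the nonzero cosets are precisely the cosets $\bw + C$ with $\w(\bw) = 1$.

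Now I would use the given transitivity hypothesis directly. By assumption, $\Aut(C)$ acts transitively on the set of weight-one vectors of $\F^n_q$. Recalling that $\Aut(C)$ acts on cosets by $\phi(\bv + C) = \phi(\bv) + C$, I take two arbitrary nonzero cosets, written as $\bw_1 + C$ and $\bw_2 + C$ with $\w(\bw_1) = \w(\bw_2) = 1$. By transitivity on weight-one vectors there is $\phi \in \Aut(C)$ with $\phi(\bw_1) = \bw_2$, and therefore $\phi(\bw_1 + C) = \bw_2 + C$. This shows any two nonzero cosets lie in the same orbit, so the nonzero cosets form a single orbit. Combined with the fixed orbit $\{C\}$, we get exactly $2 = \rho + 1$ orbits, which is the definition of complete transitivity.

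**The main point to get right** is the reduction step asserting that every nonzero coset contains a weight-one representative; this is where the hypothesis $\rho = 1$ does all the work and must be invoked cleanly. A subtle secondary point is that a weight-one coset could a priori contain several weight-one vectors (lying in different $\F_q^*$-scaled or permuted positions), but this causes no difficulty: I only need \emph{one} weight-one representative per nonzero coset to apply the transitivity hypothesis, and the action on weight-one \emph{vectors} being transitive immediately forces the induced action on the cosets they represent to be transitive as well. No obstacle beyond carefully matching the definition of ``transitive'' (action on weight-one vectors) to the definition of ``completely transitive'' (number of orbits on cosets) is expected.
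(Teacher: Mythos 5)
Your proposal is correct and is essentially the paper's own argument: the paper's proof simply observes that, since $\rho=1$, every coset other than $C$ has a weight-one leader, so transitivity of $\Aut(C)$ on weight-one vectors puts all nonzero cosets in a single orbit, giving $\rho+1=2$ orbits. You have merely filled in the details (the fixed orbit $\{C\}$, and the fact that $\phi(\bw_1)=\bw_2$ implies $\phi(\bw_1+C)=\bw_2+C$) that the paper dismisses as obvious.
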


\begin{proof}
Obvious, since all cosets of $C$, different of $C$, have leaders of weight 1. Thus, all such cosets are
in the same orbit.
\end{proof}

\medskip

It has been conjectured~\cite{Neum} for a long time that if $C$ is a
completely regular code and $|C|>2$, then $e \leq 3$. For the
special case of binary linear completely transitive codes
\cite{Sole}, the problem of existence is solved: it is proven in
\cite{Bor1,Bor2} that for $e \geq 4$ such nontrivial codes do not
exist. The conjecture is also proven for the case of perfect codes
($e=\rho$) \cite{Tiet,Zino} and quasi-perfect ($e+1=\rho$) uniformly
packed codes \cite{Goet,Tilb}, defined and studied also in
\cite{Bas1,Sem2}.

When $e\leq 3$, there are well known completely regular codes and,
recently, we have presented new constructions of binary and
non-binary completely regular codes \cite{Bor3,Rif1,Rif2}. However,
there does not exist a general classification of completely regular
codes with $e\leq 3$. In this paper we consider $q$-ary linear
completely regular codes with $\rho=1$.
A surprising fact is that to characterize all linear completely
regular codes with $\rho = 1$ we need only three constructions
($q$-repeated code construction, direct construction and Kronecker
product construction).

The paper is organized as follows. In Section 2 we present the $q$
times repeating construction to obtain linear or nonlinear $q$-ary
completely regular codes with $d=1$. In Section 3, we give a direct
construction to obtain $q$-ary linear completely regular codes with
$\rho=1$ and $d\in\{1,2\}$, we also introduce the Kronecker product
of matrices as an important tool to characterize $q$-ary linear
completely regular codes with $\rho=1$ and, finally, we show that
all such completely regular codes are completely transitive too.

\section{Completely regular codes with $d=1$ and the $q$-repeated code construction}

We start with a first example of family of completely regular codes with minimum distance $d=1$.

\begin{lemm}\label{Coveringd1}
Let $C$ be a perfect (binary or non-binary) code. Then $C(\rho)$ has
minimum distance 1.
\end{lemm}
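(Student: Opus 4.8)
The plan is to exploit the defining property of perfect codes. Since $e=\rho$ here, a standard computation from $e=\lfloor(d-1)/2\rfloor$ forces the minimum distance to be exactly $d=2\rho+1$, and the spheres of radius $\rho$ centred at the codewords partition $\F^n_q$. Consequently a vector $\bx$ satisfies $d(\bx,C)=\rho$ precisely when its unique nearest codeword lies at distance exactly $\rho$ from it. To prove that $C(\rho)$ has minimum distance $1$ it therefore suffices to exhibit two vectors $\bu,\bv\in C(\rho)$ with $d(\bu,\bv)=1$, since any two distinct vectors are at distance at least $1$ and hence the minimum distance cannot be smaller.

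First I would fix two codewords $\bc_1,\bc_2$ attaining the minimum distance, so that $d(\bc_1,\bc_2)=2\rho+1$, and let $P=\{i_1,\dots,i_{2\rho+1}\}$ be the set of positions in which they differ; such a pair exists for any perfect code with more than one codeword. The idea is to walk along a minimal path from $\bc_1$ to $\bc_2$, changing one coordinate at a time. Concretely, let $\bu$ agree with $\bc_2$ on the positions $i_1,\dots,i_\rho$ and with $\bc_1$ on every other coordinate, and let $\bv$ agree with $\bc_2$ on $i_1,\dots,i_{\rho+1}$ and with $\bc_1$ elsewhere.

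A direct count then gives $d(\bu,\bc_1)=\rho$, $d(\bu,\bc_2)=\rho+1$, $d(\bv,\bc_1)=\rho+1$ and $d(\bv,\bc_2)=\rho$. The key step is to upgrade these local distances to statements about the distance to the whole code: because the spheres of radius $\rho$ are pairwise disjoint, $\bc_1$ is the unique codeword within distance $\rho$ of $\bu$, whence $d(\bu,C)=\rho$ and $\bu\in C(\rho)$; symmetrically $d(\bv,C)=\rho$ and $\bv\in C(\rho)$. Finally $\bu$ and $\bv$ differ in the single coordinate $i_{\rho+1}$, so $d(\bu,\bv)=1$, and the minimum distance of $C(\rho)$ equals $1$.

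I expect the main obstacle to be this middle step rather than the arithmetic: one must argue carefully that the local distances to $\bc_1$ and $\bc_2$ really are the global distances to $C$, for otherwise a nearer codeword could in principle pull $d(\bu,C)$ or $d(\bv,C)$ below $\rho$ and destroy membership in $C(\rho)$. This is exactly where perfectness enters, through the disjointness of the radius-$\rho$ spheres, and it is also why I insist on a pair of codewords at the \emph{exact} minimum distance $2\rho+1$: a more distant pair would leave $\bv$ at distance greater than $\rho$ from both $\bc_1$ and $\bc_2$, with no control over $d(\bv,C)$. The argument is insensitive to $q$, so it treats the binary and non-binary cases simultaneously, and it uses only that $|C|\ge 2$ so that a minimum-distance pair is available.
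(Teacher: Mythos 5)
Your proof is correct, but it takes a genuinely different route from the paper's. The paper normalizes $\bo\in C$, takes a vector $\bx\in C(\rho)$ of weight $\rho$ and a neighbour $\bx'$ with $\wt(\bx')\geq\rho$, and argues by contradiction: if $\bx'\notin C(\rho)$, then some codeword $\by\neq\bo$ lies at distance $\rho-1$ from $\bx'$, so $\bx$ would be within distance $\rho$ of the two distinct codewords $\bo$ and $\by$, contradicting the fact that the radius-$\rho$ balls around codewords are pairwise disjoint. You instead anchor between two codewords: you take a minimum-distance pair $\bc_1,\bc_2$ and the two adjacent points $\bu,\bv$ in the middle of a geodesic joining them, so that each lies at distance exactly $\rho$ from one of the two codewords, and disjointness of the balls yields $d(\bu,C)=d(\bv,C)=\rho$ directly, with no contradiction argument. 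Both proofs hinge on the same property of perfect codes (every vector lies in exactly one ball of radius $\rho$ centred at a codeword); yours buys a direct and symmetric construction, at the price of needing $|C|\geq 2$ and the exact value $d=2\rho+1$, neither of which the paper's argument invokes explicitly.

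One justification in your write-up should be tightened. You assert that $d=2\rho+1$ is forced by ``a standard computation from $e=\lfloor(d-1)/2\rfloor$''; that computation alone only gives $d\in\{2\rho+1,\,2\rho+2\}$. The case $d=2\rho+2$ is indeed impossible, but excluding it requires the covering property, not just arithmetic: if $\bc_1,\bc_2$ were at distance $2\rho+2$, a vector $\bz$ at distance $\rho+1$ from both would have to lie within distance $\rho$ of some third codeword $\bc_3$ (covering radius $\rho$), giving $d(\bc_1,\bc_3)\leq(\rho+1)+\rho=2\rho+1<d$, a contradiction. Since this uses only the sphere-partition property you already invoke, the repair is immediate and your proof stands; but as written, the existence of a pair at distance exactly $2\rho+1$ --- which your own closing remarks identify as essential to keeping $\bv$ inside $C(\rho)$ --- is not actually justified.
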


\begin{proof}
Without loss of generality, we assume that ${\bf 0}\in C$. Let
$\bx\in C(\rho)$ with $\wt(\bx)=\rho$ and let $\bx'$ be a vector
such that $d(\bx,\bx')=1$ and $\wt(\bx')\geq \rho$. We claim that
$\bx'\in C(\rho)$ and then the minimum distance in $C(\rho)$ is 1.
Assume to the contrary that $\bx'\notin C(\rho)$, then clearly
$d(\bx',C)=\rho-1$. Notice also that a codeword $\by$ at distance
$\rho-1$ of $\bx'$ cannot be ${\bf 0}$. Hence we obtain a
contradiction because $\bx$ is at distance $\rho$ from more than one
codeword.
\end{proof}

As we have seen in Lemmas \ref{CoveringSet} and \ref{Coveringd1}, the covering set $C(\rho)$ of any perfect code is a
completely regular code with minimum distance $d=1$. In particular,
if $C$ is a single error-correcting code ($e=1$), then $C(\rho)$ is exactly
the complement of $C$. But these are not the only examples of
completely regular codes with $d=1$.

Let $C$ be a $[n,k,d]_q$ code. We construct the $q$-repeated code
$C'\subseteq \F^{n+1}_q$ of $C$ as follows: for any codeword
$\bx=(x_1,\ldots,x_n)\in C$, we have $q$ codewords in $C'$, namely
$$(0,x_1,\ldots,x_n), (1,x_1,\ldots,x_n),
\ldots,(q-1,x_1,\ldots,x_n).$$

\begin{lemm}\label{double}
Let $C$ be a $[n,k,d]_q$ code and let $C'\subseteq \F^{n+1}_q$ be
its $q$-repeated code. Let $\bx=(x_1,\ldots,x_n)\in\F^n_q$ be a
vector at distance $i$ from $\alpha_i$ codewords in $C$ and at
distance $i-1$ from $\alpha_{i-1}$ codewords in $C$. Then, any
vector of the form $\bx'=(x_0,x_1,\ldots,x_n)$ is at distance $i$
from exactly $\alpha_i+(q-1)\alpha_{i-1}$ codewords in $C'$.
\end{lemm}

\begin{proof}
For any codeword $\bz=(z_1,\ldots,z_n)\in C$ such that
$d(\bz,\bx)=i$ we have that $\bz'=(x_0,z_1,\ldots,z_n)$ is in $C'$
and $d(\bz',\bx')=i$. Moreover, for any codeword
$\by=(y_1,\ldots,y_n)\in C$ such that $d(\by,\bx)=i-1$, we have that
the $q-1$ vectors of the form $(y_0,y_1,\ldots,y_n)$ with $y_0\neq
x_0$ are codewords in $C'$ and they are at distance $i$ from $\bx'$.
It is clear that there are no more codewords in $C'$ at distance $i$
from $\bx'$.
\end{proof}

\begin{theo}\label{theo:3.1} $(q$-Repeated code construction$)$
Let $C$ be a $[n,k,d]_q$ code with covering radius $\rho$. Then the
$q$-repeated code $C'\subseteq \F^{n+1}_q$ has $\rho'=\rho$ and
minimum distance $d'=1$. Moreover $C'$ is completely regular if and
only if $C$ is completely regular.
\end{theo}

\begin{proof}
For any vector $\bx'=(x_0,x_1,\ldots,x_n)\in \F^{n+1}_q$, call
$\bx=(x_1,\ldots,x_n)\in \F^n_q$ the corresponding `reduced' vector.
Suppose that $\by=(y_1,\ldots,y_n)\in C$ is a codeword at minimum
distance from $\bx$. Then it is clear that
$\by'=(x_0,y_1,\ldots,y_n)$ is a codeword in $C'$ at minimum
distance from $\bx'$. Therefore $\rho=\rho'$.

Now, assume that $C$ is completely regular. For any vector
$\bx=(x_1,\ldots,x_n)\in \F_q^n$ at distance $t\leq \rho$ from $C$,
define $\alpha_i(t)$ as the number of codewords in $C$ at distance
$i$ from $\bx$ $(0\leq i \leq n)$. As $C$ is completely regular, we
know that $\alpha_i(t)$ does not depend on $\bx$, but just on $t$
and $i$. We want to see that for the vector
$\bx'=(x_0,x_1,\ldots,x_n)\in \F_q^{n+1}$, which is at distance $t$
form $C'$, we also have that the number of codewords in $C'$ at
distance $i$, say $\alpha'_i(t)$, depends only on $t$ and $i$. But
this is straightforward because using Lemma \ref{double} we have
$\alpha'_i(t)=\alpha_i(t)+(q-1)\alpha_{i-1}(t)$, for all
$i=0,\ldots,n$, and $\alpha'_{n+1}(t)=(q-1)\alpha_n(t)$.

Conversely, assume that $C$ is not completely regular. Let
$\bx,\by\in \F_q^n$ be such that $d(\bx,C)=d(\by,C)=t>0$ and let
$\alpha_{\bx,i}(t)$ (respectively $\alpha_{\by,i}(t)$) denote the
number of codewords at distance $i$ from $\bx$ (respect. $\by$), for
$0\leq i\leq n$. Since $C$ is not completely regular, we can select
$\bx$ and $\by$ such that $\alpha_{\bx,i}(t)\neq \alpha_{\by,i}(t)$
for some $i\geq t$. Let $i$ be the minimum possible such value
(possibly, $i=t$), that is
$\alpha_{\bx,i-1}(t)=\alpha_{\by,i-1}(t)$. Then, for the
$q$-repeated vectors $\bx'$ and $\by'$, we have
$\alpha'_{\bx',i}(t)\neq\alpha_{\by',i}(t)$ by Lemma \ref{double}.
Consequently, $C'$ is not completely regular.
\end{proof}

Hence, we can start with any completely regular code
and obtain an infinite family of completely regular codes with the
same covering radius. We remark that this construction is also valid for
nonlinear codes.

Conversely, for the linear case with $d=1$, we have the
following:

\begin{coro}\label{Casd1}
Let $C\neq\F^n_q$ be a $q$-ary linear code with minimum distance
$d=1$ and covering radius $\rho$. Then $C$ can be obtained using the
$q$-repeated code construction (repeating the process some number of
times) from a code $D$ which has minimum distance greater than one
and covering radius $\rho$. Moreover, $C$ is completely regular if
and only if $D$ is completely regular.
\end{coro}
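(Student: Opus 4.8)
The plan is to establish the corollary by using the $q$-repeated code construction in reverse, showing that a linear code with $d=1$ must arise from such repetitions. The key observation is that a linear code $C$ with minimum distance $d=1$ contains a codeword of weight one, say $\bc$ with $\supp(\bc)=\{j\}$. Since $C$ is linear, it contains all scalar multiples $\lambda\bc$ for $\lambda\in\F_q$, so the entire coordinate $j$ is ``free'' in the sense that $C$ contains every vector supported only on position $j$. I would first make this precise: after a suitable coordinate permutation (which does not affect the relevant properties by Lemma~\ref{Equivalents}), we may assume the weight-one codeword is supported on the first coordinate, and then verify that $C$ decomposes exactly as the $q$-repeated code of its projection $D_1$ onto the remaining $n-1$ coordinates.

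The main technical step is to check that this projection $D_1$ is itself a linear code and that $C$ equals the $q$-repeated code of $D_1$. First I would argue that for every codeword $\bx=(x_1,\ldots,x_n)\in C$, subtracting the appropriate multiple $x_1\bc$ produces a codeword with first coordinate zero; hence $C$ contains, for each $\bx\in C$, all $q$ vectors obtained by replacing $x_1$ with an arbitrary field element while keeping the last $n-1$ coordinates fixed. This is precisely the $q$-repeated structure. Then $D_1=\{(x_2,\ldots,x_n) : (0,x_2,\ldots,x_n)\in C\}$ is linear of length $n-1$, and $C$ is the $q$-repeated code of $D_1$. By Theorem~\ref{theo:3.1}, $D_1$ has covering radius $\rho$ and $D_1$ is completely regular if and only if $C$ is.

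Next I would iterate: if $D_1$ still has minimum distance one, repeat the argument to peel off another coordinate, obtaining $D_2$, and so on. Since each step strictly decreases the length, this process terminates after finitely many steps with a code $D$ whose minimum distance is greater than one (or $D$ is the trivial code of length zero, which we can handle as a degenerate base case since $C\neq\F_q^n$ guarantees $D$ is nonempty and proper). By repeated application of Theorem~\ref{theo:3.1}, $D$ has covering radius $\rho$, and $C$ is completely regular if and only if $D$ is completely regular, since each repetition step preserves complete regularity in both directions.

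The step I expect to require the most care is verifying that the projection $D_1$ is well-defined and genuinely linear, and that the decomposition of $C$ as the $q$-repeated code of $D_1$ is exact rather than merely containing it. This hinges on the linearity of $C$ forcing \emph{all} $q$ choices of the first coordinate to appear attached to each fixed tail $(x_2,\ldots,x_n)$; without linearity this would fail, which is exactly why the corollary is restricted to the linear case. Once this exact decomposition is confirmed, the termination and the iterated invocation of Theorem~\ref{theo:3.1} are routine.
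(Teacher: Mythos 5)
Correct, and essentially the paper's own argument: both proofs invert the $q$-repeated construction---using linearity so that each weight-one codeword makes one coordinate ``free''---and then invoke Theorem~\ref{theo:3.1} to transfer the covering radius and the complete regularity property between $C$ and $D$. The paper packages the peeling into a single step (choose a generator matrix containing a maximal linearly independent set of weight-one codewords, delete those rows and the resulting zero columns), while you peel one coordinate at a time and iterate; this is the same decomposition, and your explicit verification that $C$ is \emph{exactly} the $q$-repeated code of its projection is precisely the detail the paper leaves implicit in its generator-matrix formulation.
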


\begin{proof}
Let $G$ be a generator matrix for $C$ containing all linear
independent codewords of weight 1. The desired code $D$ is then
obtained removing from $G$ all row vectors of weight 1 and the
resulting zero columns. As we have seen in Theorem \ref{theo:3.1},
the covering radius does not change and $C$ is completely regular if
and only if $D$ is completely regular.
\end{proof}

\section{Completely regular codes with $\rho=1$}

Since $e\leq\rho$, completely regular codes with $\rho=1$ must have
minimum distance $d\leq 3$. When $d=1$ we have seen, in the previous
section, that we can obtain these codes using the $q$-repeated
construction starting from codes with the same covering radius
$\rho=1$ and with minimum distance greater than 1. For $d=3$, we
have $e=\rho$ and these codes are perfect. Linear perfect codes with
$e=1$ are the well known Hamming codes.

Therefore, if $\rho=1$ the case to focus our interest is $d=2$. A
first example where we construct linear codes with these parameters,
$\rho=1$ and $d=2$, is given by the following theorem.

\begin{theo}\label{theo:4.1} $($Direct construction$)$
Let $C$ be a $[m+1,m,d]_q$ code defined by a generating matrix $G$,
\[
G~=~[I|\bh],
\]
where $I$ is the identity matrix of order $m$, and $\bh$ is an
arbitrary nonzero column vector from $\F^m_q$. Then, if
$\wt(\bh)<m$, the code $C$ is a completely regular code with
$d=\rho=1$. If $\wt(\bh)=m$, then $C$ is a completely regular code
with $d=2$ and $\rho=1$.
\end{theo}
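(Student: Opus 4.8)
The plan is to read off $\rho$ and $d$ directly from the coset structure of $C$, and then obtain complete regularity from a single scalar symmetry, so that no weight-enumerator computation is needed. First I would note that a codeword is $\bx G=(\bx,\la \bx,\bh\ra)$ for $\bx\in\F^m_q$, so $C$ is the kernel of the single parity check $\bv\mapsto \la(v_1,\dots,v_m),\bh\ra-v_{m+1}$. Hence $C$ has codimension $1$ and its $q$ cosets are indexed by the syndrome $s\in\F_q$. Since the last column of this parity check is nonzero, the weight-one vector $-s\,\be_{m+1}$ lies in the coset of syndrome $s$ for every $s\neq 0$; thus every nonzero coset has a weight-one leader and $\rho\le 1$. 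As $C\neq\F^{m+1}_q$ we conclude $\rho=1$ in both cases.

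Next I would settle the minimum distance. A weight-one codeword would have to be of the form $(\alpha\,\be_j,0)$ with $\alpha\neq 0$ and $\alpha h_j=0$, which forces $h_j=0$; the alternative of a nonzero last coordinate is impossible, because $\bx=\bo$ yields $\la\bx,\bh\ra=0$. Therefore, if $\wt(\bh)<m$ some $h_j=0$ and $(\be_j,0)\in C$ gives $d=1$, whereas if $\wt(\bh)=m$ no weight-one codeword exists while $(\be_j,h_j)$ has weight $2$, so $d=2$.

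The crux is complete regularity. Because $\rho=1$, every coset other than $C$ has weight $1$, so it suffices to show that all $q-1$ weight-one cosets share a single weight distribution. For each nonzero $c\in\F_q$ the scalar map $\phi_c:\bv\mapsto c\,\bv$ is a weight-preserving linear automorphism lying in $\Aut(C)$, and since the parity check scales by $c$, it sends the syndrome-$s$ coset to the syndrome-$cs$ coset. Taking $c=s'\,s^{-1}$ carries any nonzero syndrome $s$ to any prescribed nonzero $s'$, so all weight-one cosets have identical weight distributions and $C$ is completely regular; in fact $\Aut(C)$ then has exactly the two orbits $\{C\}$ and the set of weight-one cosets, so $C$ is even completely transitive. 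The only step demanding care is the minimum-distance dichotomy — in particular verifying that the last coordinate cannot by itself yield a weight-one codeword — since the complete-regularity step, once reduced to the weight-one cosets, is handled uniformly by the scalar symmetry with no computation at all.
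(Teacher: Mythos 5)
Your proof is correct, and on the decisive point---complete regularity---it takes a genuinely different route from the paper. The preliminary steps essentially coincide: like the paper, you read $\rho=1$ off the single parity check (the paper writes $H=[-\bh^t\,|\,1]$) together with $C\neq\F_q^{m+1}$, and you get the $d=1$/$d=2$ dichotomy from the presence or absence of a zero entry of $\bh$; your explicit check that a weight-one codeword cannot have its nonzero entry in the last coordinate is a detail the paper treats as obvious. For complete regularity, however, the paper argues by direct counting: it takes a weight-one vector $\bx\notin C$, splits into the cases where the nonzero coordinate is among the first $m$ positions or is the last position, and exhibits in each case exactly $\wt(\bh)+1$ codewords at distance $1$ from $\bx$. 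You instead observe that the scalar maps $\bv\mapsto c\bv$, $c\in\F_q\setminus\{0\}$, are weight-preserving elements of $\Aut(C)$ that multiply syndromes by $c$, hence act transitively on the $q-1$ nonzero cosets, so all weight-one cosets have identical weight distributions (for $q=2$ there is only one nonzero coset and nothing to prove). Each approach buys something. Your symmetry argument verifies Definition \ref{de:1.1} in full: it equates the \emph{entire} weight distributions of the nonzero cosets, for every distance $i$ simultaneously, and it delivers complete transitivity of $C$ as a free by-product, anticipating part (iv) of the paper's final theorem. The paper's counting, strictly speaking, only establishes the constancy of the number of codewords at distance $1$ from a non-codeword; to deduce full complete regularity from that single intersection number one must invoke the equivalence (for $\rho=1$) between complete regularity and equitability of the distance partition, which the paper leaves implicit. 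On the other hand, the counting produces the explicit value $n_a=\wt(\bh)+1$ for the number of codewords at distance one from a non-codeword, a parameter that reappears in the paper's subsequent classification (e.g.\ Lemma \ref{lemm:5.2} and Corollary \ref{coro:1}), whereas your argument does not compute it.
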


\begin{proof}
Clearly, if $\wt(\bh)<m$, then the minimum distance of $C$ is 1 and if $\wt(\bh)=m$, then
the minimum distance is 2. A parity check matrix for $C$ is given by
$$
H=[-\bh^t | 1]
$$
and any pair of columns are linearly dependent. Hence $\rho=1$.


In order to see that $C$ is completely regular, we take a vector
$\bx$ at distance 1 from $C$ (or, the same, $\bx\notin C$) and we
prove that the number of codewords at distance 1 from $\bx$ is
always the same. Assume, without loss of generality, that
$\bx=(x_1,\ldots,x_{m+1})$ has weight 1. Let $w=\wt(\bh)$ and let
$x_i$ be the nonzero coordinate of $\bx$. First, we consider the
case $i< m+1$. The codewords at distance 1 from $\bx$ are ${\bf 0}$,
$x_i \bv^{(i)}$, where $\bv^{(i)}$ is the $i$-th row of $G$ (notice
that $\bv^{(i)}$ has weight 2, otherwise $\bx$ would be a codeword)
and the codewords of weight 2 with the value $x_i$ at the $i$-th
coordinate which are of the form: $\by^{(ij)}=x_i \bv^{(i)} +
\alpha_j \bv^{(j)}$ for all row vectors $\bv^{(j)}$ of weight 2
$(j\neq i)$, where $\alpha_j \in \F_q$ is taken such that the last
coordinate of $\by^{(ij)}$ is zero. Thus, we have $w+1$ codewords at
distance 1 from $\bx$. Finally, consider the case $i=m+1$. The
codewords at distance 1 from $\bx$ are ${\bf 0}$ and the $w$
codewords of the form $\by^{(j)}=\alpha_j \bv^{(j)}$, where
$\bv^{(j)}$ has weight 2 and $\alpha_j \in \F_q$ is taken such that
the last coordinate of $\by^{(j)}$ is $x_i$. Again, we obtain $w+1$
codewords at distance 1 from $\bx$.
\end{proof}

From~now on, our goal is to classify all the linear completely regular codes with $\rho=1$ and $d=2$.

We will begin by introducing the Kronecker product of matrices and showing that this tool will help us in the construction of linear completely regular codes with the required parameters.

\begin{defi}
The Kronecker product of two matrices $A=[a_{r,s}]$ and $B =[b_{i,j}]$ over
$\F_q$ is a new matrix $H = A \otimes B$ obtained by changing any element
$a_{r,s}$ in $A$ by the matrix $a_{r,s} B$.
\end{defi}

A {\em repetition code} is a $[n,1,n]_q$ code. In this paper, we
assume that such a repetition code has all codewords of the form
$(c,c,\ldots,c)$ for $c\in\F_q$.

\begin{lemm}\label{HammingTrans}
Let ${\cal H}$ be $[n,k,3]_q$ Hamming code. Then, $\Aut({\cal H})$
is transitive.
\end{lemm}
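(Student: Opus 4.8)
The plan is to exhibit enough automorphisms of the $[n,k,3]_q$ Hamming code $\mathcal{H}$ to move any weight-one vector of $\F^n_q$ to any other. Recall that transitivity here means transitivity on the $n(q-1)$ weight-one vectors, and that $\Aut(\mathcal{H})$ consists of monomial matrices (coordinate permutations combined with nonzero scalar multipliers per position) preserving $\mathcal{H}$.

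First I would recall the standard description of the Hamming code through its parity check matrix $H$, whose columns are a maximal set of pairwise linearly independent vectors in $\F_q^{n-k}$, i.e. exactly one nonzero representative from each one-dimensional subspace, so $n = (q^{n-k}-1)/(q-1)$. The key observation is that a monomial transformation $\sigma$ of the coordinates lies in $\Aut(\mathcal{H})$ precisely when it permutes the columns of $H$ up to the allowed scalar factors; equivalently, a linear map $A \in GL(n-k,q)$ acting on the column space induces such an automorphism, since $A$ sends each one-dimensional subspace to another one-dimensional subspace, permuting the projective points indexed by the columns and rescaling each column by whatever factor is needed to land on the chosen representative. Thus every element of $PGL(n-k,q)$ (indeed of $GL(n-k,q)$) yields an element of $\Aut(\mathcal{H})$.

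The main step is then to translate transitivity on weight-one vectors into a statement about the induced action. A weight-one vector supported at position $i$ with value $x_i$ corresponds, via the syndrome map $\bv \mapsto \bv H^t$, to the column $\bh_i$ of $H$ scaled by $x_i$, that is to a specific nonzero vector of $\F_q^{n-k}$. I would argue that moving one weight-one vector to another amounts to finding a monomial automorphism sending column $i$ (times $x_i$) to column $j$ (times $x_j$), and this is achieved exactly when the induced $A \in GL(n-k,q)$ carries the nonzero vector $x_i \bh_i$ to the nonzero vector $x_j \bh_j$. Since $GL(n-k,q)$ acts transitively on the nonzero vectors of $\F_q^{n-k}$, such an $A$ always exists, and it induces a coordinate automorphism of $\mathcal{H}$ realizing the desired map on weight-one vectors.

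The part requiring care is verifying that the abstract $A \in GL(n-k,q)$ really descends to a genuine monomial automorphism of the \emph{code} (not merely a symmetry of the syndrome space) and that it sends the chosen weight-one vector to the chosen one with the correct scalar, rather than to some scalar multiple. Concretely, if $A(x_i\bh_i)=x_j\bh_j$ then $A\bh_i=\lambda\bh_{\pi(i)}$ for the permutation $\pi$ it induces on columns, and one must track the scalars to confirm the resulting monomial matrix maps the weight-one vector at position $i$ with entry $x_i$ exactly to the weight-one vector at position $j$ with entry $x_j$; I expect this bookkeeping to be the only real obstacle, and it is resolved by choosing the per-position scalars of the monomial matrix to absorb precisely the factor $\lambda$. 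Once this is settled, transitivity follows immediately, and by Lemma \ref{TransitiveCT} the Hamming code is completely transitive.
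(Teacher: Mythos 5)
Your proposal is correct and takes essentially the same route as the paper: use transitivity of $GL(n-k,q)$ on the nonzero syndromes, note that any invertible map of the syndrome space induces a monomial automorphism of ${\cal H}$ because the columns of $H$ represent all projective points, and then verify the scalars match. The paper's matrices $K$ and $M$ with $KH=HM^t$ are exactly your $A$ and its induced monomial transformation, and its closing step (a weight-one vector is determined by its syndrome, since no two columns of $H$ are proportional) is the same bookkeeping you describe resolving via the scalar $\lambda$.
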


\begin{proof}
Let $G$ and $H$ be generator and parity check matrices,
respectively, for ${\cal H}$. Let $\bx$ and $\by$ be an arbitrary
pair of weight one vectors. We want to find a linear automorphism of
${\cal H}$ that sends $\bx$ to $\by$. It is straightforward to find
an invertible $(n-k)\times (n-k)$ matrix $K$, with entries in $\F_q$
and such that $KH\bx^t=H\by^t$. Since $H$ is a parity check matrix
of a Hamming code, there exists a monomial $n\times n$ matrix $M$
such that $KH=HM^t$. Since $H(M^t G^t)=KHG^t=0$, we have that $GM$
is also a generator matrix for ${\cal H}$. Thus, $M$ is the monomial
matrix associated to a linear automorphism $\phi\in\Aut({\cal H})$.

Now, $KH\bx^t=H\by^t$ implies $HM^t\bx^t=H\by^t$. As $M^t\bx^t$ and
$\by^t$ have weight one and $H$ has no repeated columns, we conclude
$\bx M=\by$ or, the same, $\phi(\bx)=\by$.
\end{proof}

\begin{theo}\label{theo:1}
Let $C$ be the linear code over $\F_q$ which has $H=A\otimes B$ as a
parity check matrix, where $A$ is a generator matrix for the
repetition $[n_a,1,n_a]_q$ code of length $n_a$ and $B$ is a parity
check matrix of a Hamming code with parameters $[n_b,k_b,3]_q$,
where $n_b=(q^{m_b}-1)/(q-1)$ and $k_b=n_b-m_b$.
\begin{enumerate}
\item [(i)] Code $C$ has length $n=n_a{\cdot}n_b$,
dimension $k= n - m_b$ and covering radius $\rho=1$.
\item[(ii)] If $n_a>1$, then the minimum distance of $C$ is $d=2$. If $n_a=1$, then $d=3$.
\item [(iii)] $\Aut(C)$ is transitive and, therefore, $C$ is a completely transitive code and a completely
regular code.
\end{enumerate}
\end{theo}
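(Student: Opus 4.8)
The plan is to dispose of (i) and (ii) directly from the block form of $H$, and to concentrate on the transitivity assertion (iii). Since $A$ is the all-ones generator $(1,\dots,1)$ of the repetition code, the Kronecker product is $H=[\,B\mid B\mid\cdots\mid B\,]$, that is, $n_a$ identical copies of $B$; I index the $n=n_an_b$ coordinates by pairs $(s,j)$ so that the column of $H$ at position $(s,j)$ is the $j$-th column $B_j$ of $B$. For (i), $H$ and $B$ have the same row space, so $\rank H=\rank B=m_b$ and $k=n-m_b$, and the columns $B_j$ represent all $1$-dimensional subspaces of $\F_q^{m_b}$, whence the multiples $c\,B_j$ ($c\neq0$) realize every nonzero syndrome; thus $\rho\le1$, and since $C\neq\F_q^n$ we get $\rho=1$. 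For (ii), the minimum distance is the least number of dependent columns of $H$: every column is nonzero so $d\ge2$, and if $n_a>1$ two identical copies of some $B_j$ are dependent, forcing $d=2$, while if $n_a=1$ then $H=B$ and $d=3$.

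The substance is (iii), for which I would produce two families of elements of $\Aut(C)$ and compose them. The first permutes the blocks: for a permutation matrix $P$ on the $n_a$ copies put $M=P\otimes I_{n_b}$; because the all-ones row $A$ is fixed by any column permutation, the Kronecker mixed-product rule gives $HM^t=(AP^t)\otimes B=A\otimes B=H$, so $M\in\Aut(C)$, and it reassigns $(s,j)\mapsto(s',j)$ without changing values or the $j$-coordinate. The second lifts automorphisms of the Hamming code ${\cal H}$ with check matrix $B$: if $M_b$ is the monomial matrix of some $\phi\in\Aut({\cal H})$, then $BM_b^t=K_bB$ for an invertible $K_b$, exactly the relation produced in the proof of Lemma \ref{HammingTrans}; setting $M=I_{n_a}\otimes M_b$, the same rule gives $HM^t=A\otimes(K_bB)=K_bH$, so $M\in\Aut(C)$, and this $M$ applies $\phi$ simultaneously inside every block.

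Transitivity then follows in two moves. Given weight-one vectors $\bx$ at $(s,j)$ with value $c$ and $\by$ at $(s',j')$ with value $c'$, I first apply a block permutation carrying $(s,j,c)$ to $(s',j,c)$; then, using Lemma \ref{HammingTrans} to choose $\phi\in\Aut({\cal H})$ sending the weight-one vector $c\,e_j$ to $c'\,e_{j'}$ in $\F_q^{n_b}$, I apply the lift $I_{n_a}\otimes M_b$, which acts as $\phi$ in block $s'$ and hence sends $(s',j,c)$ to $(s',j',c')$. The composite lies in $\Aut(C)$ and maps $\bx$ to $\by$, so $\Aut(C)$ is transitive; since $\rho=1$, Lemma \ref{TransitiveCT} then yields that $C$ is completely transitive, and therefore completely regular.

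I expect the one genuinely delicate point to be the verification that the two Kronecker-built monomial matrices really stabilize $C$, i.e. that $HM^t=KH$ holds with $K$ invertible in each case, together with the bookkeeping of the scalar $c$ as it passes through the block permutation; the transitivity of ${\cal H}$ needed in the second move is already furnished by Lemma \ref{HammingTrans}, so no extra facts about Hamming codes are required.
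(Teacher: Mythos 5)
Your proposal is correct and takes essentially the same route as the paper: (i) and (ii) by direct inspection of $H=[B\,|\,B\,|\cdots|\,B]$, and (iii) by combining within-block transitivity coming from Lemma \ref{HammingTrans} with column permutations that exploit the repeated copies of $B$ to move between blocks, concluding via Lemma \ref{TransitiveCT}. If anything, your explicit mixed-product verification that $P\otimes I_{n_b}$ and $I_{n_a}\otimes M_b$ satisfy $HM^t=KH$ with $K$ invertible makes rigorous the step the paper leaves implicit (the paper simply asserts the within-block action of $\Aut(C)$ and uses a single transposition of two equal columns instead of a full block permutation).
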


\begin{proof}  It is straightforward to check that the code $C$ has length
$n=n_a{\cdot}n_b$, dimension $k= n - m_b$ and covering radius $\rho=1$.

If $n_a=1$, then $C$ is a Hamming code and $d=3$. If $n_a>1$, then $H$ has repeated columns and $d=2$.

The matrix $H$ is of the form
$$
H=[B\;B\;\cdots B],
$$
where $B$ is a parity check matrix for a Hamming code ${\cal H}$. By
Lemma \ref{HammingTrans}, $\Aut({\cal H})$ is transitive on the set
of weight one vectors with support contained in the set of
coordinate positions of ${\cal H}$. Hence we have that $\Aut(C)$ is
transitive on each set of weight one vectors with support contained
in the set of $n_b$ coordinate positions corresponding to each
submatrix $B$. Now consider two vectors of weight one $\bx\in\F^n_q$
and $\by\in\F^n_q$, such that the nonzero entry of $\bx$ is
$x\in\F_q$ at position $i$ and the nonzero entry of $\by$ is $y$ at
position $j$ and assume that $i$ and $j$ are in different
$n_b$-sets, i.e. sets of cardinality $n_b$, of coordinate positions.
Let $\varphi\in\Aut(C)$ such that $\varphi(\bx)=\bx'$, where $\bx'$
has weight one with its nonzero entry equal to $y$ at position $i'$,
in the same $n_b$-set of coordinate positions, where the column
vector of $H$ in position $i'$ is the same that the column vector in
position $j$. Clearly, the transposition $\tau=(i',j)$ is a linear
automorphism of $C$. Thus, $\tau(\varphi(\bx))=\by$.

Therefore, we have proven that $\Aut(C)$ is transitive and, by Lemma
\ref{TransitiveCT}, $C$ is a completely transitive code and hence a
completely regular code.

\end{proof}

The following step is to prove that, vice versa, codes constructed in Theorem~\ref{theo:1}
are the unique linear completely regular codes with $d=2$ and $\rho=1$.

\begin{lemm}\label{lemm:5.2}
Let $C$ be a completely regular $[n,k,2]_q$ code with covering
radius $\rho=1$. Let $n_a$ be the number of codewords at distance 1
from any vector ${\bx}\notin C$. Then the following statements are
equivalent:
\begin{itemize}
\item[(i)] For any pair of coordinate positions $i$ and $j$, there
exists a codeword of weight 2 with support $\{i,j\}$.
\item[(ii)] $n_a = n$.
\item[(iii)] $C$ is a $q$-ary part of the whole space, i.e.
$|C|=q^{n-1}$.
\item[(iv)] Code $C$ has a generator matrix of the form
\[
G~=~[I|\bh],
\]
where $I$ is the identity matrix of order $n-1$, and $\bh$ is a
column vector of weight $n-1$ from $\F^{n-1}_q$.
\item[ (v)] Dual code of $C$  is equivalent to a repetition $[n,1,n]_q$ code.
\end{itemize}
\end{lemm}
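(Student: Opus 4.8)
The plan is to route all five conditions through a single structural fact about a parity-check matrix $H$ of $C$, chosen with full row rank $n-k$. First I would record two observations. Since $d=2$, no column of $H$ is zero (a zero column would produce a weight-$1$ codeword), and a weight-$2$ codeword with support $\{i,j\}$ exists precisely when the columns $H_i$ and $H_j$ are proportional, because $a\be_i+b\be_j\in C$ with $a,b\neq 0$ is equivalent to $aH_i+bH_j=\bo$. This converts statement (i) into the purely linear-algebraic assertion that all columns of $H$ are pairwise proportional, and it is the hinge for everything that follows.

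The crux is the counting that links (i) and (ii). Fix a weight-one vector $\bx=x\be_i$, which lies outside $C$ since $d=2$. I would check that the codewords at distance $1$ from $\bx$ are exactly $\bo$ together with, for each $j$ admitting a weight-$2$ codeword of support $\{i,j\}$, the unique such codeword taking the value $x$ in position $i$ (uniqueness follows from $d=2$, and codewords of weight $\geq 3$ are ruled out by a direct estimate of $\wt(\bc-\bx)$). Hence the number $n_a$ of codewords at distance $1$ from $\bx$ equals $1+N_i$, where $N_i$ counts the positions $j\neq i$ carrying a weight-$2$ codeword with support $\{i,j\}$. Complete regularity forces $n_a$ to be the same constant for every weight-one vector, so $N_i=n_a-1$ is independent of $i$; then $n_a=n$ holds if and only if $N_i=n-1$ for all $i$, which is exactly (i). This gives (i)$\Leftrightarrow$(ii).

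It remains to tie these to (iii), (iv), (v), which is routine linear algebra. By the column observation, (i) says all (nonzero) columns of $H$ are proportional to a common vector, i.e.\ $\rank H=1$, i.e.\ $n-k=1$, i.e.\ $|C|=q^{n-1}$; this is (iii), so (i)$\Leftrightarrow$(iii). When $\rank H=1$ I would take $H$ to be a single row $\bv=(v_1,\dots,v_n)$ with every $v_i\neq 0$; then $C^\perp=\la\bv\ra$ is one-dimensional with a full-weight generator, hence equivalent to the repetition $[n,1,n]_q$ code, giving (iii)$\Rightarrow$(v), while equivalence to the repetition code forces $\dim C^\perp=1$ and hence (v)$\Rightarrow$(iii). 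Finally, using $v_n\neq 0$, the $n-1$ vectors $\be_i-(v_i/v_n)\be_n$ with $1\leq i\leq n-1$ lie in $C$, are independent, and number $\dim C=n-1$, so they assemble into a generator matrix $G=[I\,|\,\bh]$ with $h_i=-v_i/v_n\neq 0$; thus $\wt(\bh)=n-1$ and (iii)$\Rightarrow$(iv), whereas (iv)$\Rightarrow$(iii) is immediate because $[I\,|\,\bh]$ has $n-1$ independent rows.

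The only genuinely delicate step is the distance-one enumeration giving $n_a=1+N_i$, where one must exhaust all codewords near $\bx$ and invoke complete regularity correctly to promote the per-coordinate count into a condition on all pairs; the remaining implications are bookkeeping with $H$ and its dual.
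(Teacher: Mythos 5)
Your proposal is correct, and its first half coincides with the paper's proof: the enumeration showing that the codewords at distance $1$ from a weight-one vector $x\be_i$ are exactly $\bo$ plus one weight-two codeword for each position $j$ supporting a weight-two codeword with $i$ (distinct positions, uniqueness from $d=2$), combined with complete regularity, is precisely how the paper gets (i)$\Leftrightarrow$(ii). Where you genuinely diverge is in tying in (iii): the paper never touches the parity-check matrix, but instead double-counts the edges of the bipartite graph between $C$ and $\F_q^n\setminus C$, obtaining the identity $(q-1)n=(q^{n-k}-1)n_a$ (its equation (\ref{SP-Condition})) and reading off (ii)$\Leftrightarrow$(iii) from it; you instead translate (i) into pairwise proportionality of the columns of $H$, hence $\rank H=1$, hence $k=n-1$, giving (i)$\Leftrightarrow$(iii) by pure linear algebra. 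Your route has two advantages: it is local and rank-theoretic rather than counting-based, and it makes (iv) and (v) completely explicit (constructing $[I|\bh]$ from the single row of $H$, and exhibiting the monomial equivalence of $C^\perp$ with the repetition code), steps the paper dismisses as ``trivial.'' The paper's route has a different payoff: the counting identity (\ref{SP-Condition}) is not a throwaway --- it is reused later (in Lemma \ref{lemm:5.3} to show $n_a$ divides $n$, and in Corollary \ref{coro:8} to compute the length of the Hamming code), so the double count earns its keep in the overall classification, whereas your argument, while cleaner for this lemma in isolation, does not produce that quantitative relation.
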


\begin{proof}
Let ${\bx}\notin C$, without loss of generality we assume that
${\bx}$ has weight 1 and let $x_i$ be the nonzero coordinate of
${\bx}$. Then the codewords at distance 1 from ${\bx}$ are the
all-zero codeword and all codewords of weight 2 with $x_i$ at the
$i$-th coordinate. Such codewords have the remaining nonzero
coordinate in different places (otherwise $C$ would have codewords
of weight one). There are $n-1$ possible different places. Hence (i)
and (ii) are equivalent.


Define the following simple bipartite graph with vertices which are all
points of $\F_q^n$ and with edges, connecting the points of $C$
with the points $C(\rho) = \F_q^n \setminus C$, if these two
points are at distance one from each other. Count the number of
edges in two ways. From one side, any codeword of $C$ is at
distance $1$ from $(q-1)n$ points of $C(\rho)$. From the other
side, any point of $C(\rho)$ is at distance $1$ from $n_a$ points
of $C$. Since these numbers should be equal, we conclude that
\[
n(q-1)\,|C|~=~n_a\,|\F_q^n \setminus C|,
\]
which gives
\begin{equation}\label{SP-Condition}
(q-1)n=(q^{n-k}-1)n_a,
\end{equation}
where $k$ is the dimension of $C$. It is clear that $n_a = n$ if
and only if $k=n-1$. This gives the equivalence between (ii) and
(iii).

The equivalence between (iii) and (iv), and between (iv) and (v) are trivial.
\end{proof}

\begin{lemm}\label{lemm:5.3}
Let $C$ be a completely regular $[n,k,2]_q$ code with covering
radius $\rho=1$. Let $n_a$ be the number of codewords at distance
one from any vector not in $C$. If $k<n-1$, then the set of
coordinate positions $\{1,\ldots,n\}$ can be partitioned into
$n_a$-sets, $X_1,\ldots,X_{n/n_a}$, such that any codeword of weight
2 has its support contained in one of these sets.
\end{lemm}

\begin{proof}
First note that $n_a\geq 2$, otherwise $C$ would be a perfect
code with $d=2$ which does not exist. By Lemma \ref{lemm:5.2}, since $k<n-1$, we also have $n_a < n$
and clearly $n_a$ divides $n$ by (\ref{SP-Condition}).

Now, for any vector $\bu\notin C$ of weight 1, consider the union of the
supports of the $n_a -1$ codewords of weight 2 that cover $\bu$.
Denote by $X(\bu)$ such set of coordinate positions and note that
$|X(\bu)|=n_a$. Let $\bv$ be another vector of weight 1 such that
its support is not in $X(\bu)$. It suffices to prove that $X(\bu)$
and $X(\bv)$ are disjoin sets. Assume to the contrary that a
coordinate position $i$ belongs to $X(\bu) \cap X(\bv)$. This means
that there is a codeword $\bx$ of weight 2 covering $\bu$ and a
codeword $\by$ of weight 2 covering $\bv$ and $supp(\bx)\cap
supp(\by) = \{i\}$. Let $\by'$ be a multiple of $\by$ such that
$y'_i=x_i$. Then, the codeword $\bz=\bx - \by'$ covers $\bu$ but
$supp(\bz)\nsubseteq X(\bu)$ which is a contradiction.
\end{proof}

\begin{coro}\label{coro:1}
With the same hypothesis of Lemma \ref{lemm:5.3}, let $D_i$ be the
code that has the codewords of $C$ such that their supports are
contained in $X_i$ and deleting the coordinate positions outside of
$X_i$. Then, $D_i$ is a linear completely regular code of length
$n_a$, dimension $n_a -1$, minimum distance $d=2$, and
covering radius $\rho=1$. A generator matrix for $D_i$ is:
$$
G_i=[I | \bh],
$$
where $\bh$ is a column vector of weight $n_a-1$.
\end{coro}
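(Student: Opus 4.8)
The plan is to identify $D_i$ with a code produced by the direct construction of Theorem~\ref{theo:4.1}, so that its complete regularity, its covering radius, and the form of its generator matrix all follow at once. First I would dispatch the elementary claims. The codewords of $C$ whose support lies in $X_i$ form the subspace $C\cap V$, where $V$ is the space of vectors of $\F^n_q$ vanishing off $X_i$; deleting the $n-n_a$ coordinates outside $X_i$ is injective and linear on this subspace (all these codewords already vanish there), so $D_i$ is a linear code, and its length is $|X_i|=n_a$ by Lemma~\ref{lemm:5.3}.

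Next I would pin down the minimum distance and the dimension. A codeword of weight $1$ in $D_i$ would lift to a codeword of weight $1$ in $C$, impossible since $d=2$; hence $d(D_i)\geq 2$ and, in particular, $\dim D_i\leq n_a-1$ (otherwise $D_i=\F^{n_a}_q$, which contains weight-one words). For the reverse inequality I would use the way $X_i$ was built in Lemma~\ref{lemm:5.3}: writing $X_i=X(\bu)$ for a weight-one vector $\bu\notin C$ with $\supp(\bu)=\{p\}\subseteq X_i$, the $n_a-1$ weight-two codewords covering $\bu$ have supports $\{p,s\}$ with the $n_a-1$ values of $s$ all distinct. Restricted to $X_i$ these give $n_a-1$ codewords of $D_i$ that are linearly independent (each occupies a coordinate position used by no other), so $\dim D_i\geq n_a-1$. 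Combining the two bounds yields $\dim D_i=n_a-1$, and since these codewords have weight $2$ we also get $d(D_i)=2$.

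Finally, since $\dim D_i=n_a-1$ there is an information set of $n_a-1$ coordinates; moving it to the front and row-reducing puts a generator matrix of $D_i$ in the systematic form $G_i=[I\,|\,\bh]$ with $\bh\in\F^{n_a-1}_q$. At this point Theorem~\ref{theo:4.1} applies directly to $G_i$: it states that the code it generates is completely regular with $\rho=1$, that $\wt(\bh)<n_a-1$ forces $d=1$, and that $\wt(\bh)=n_a-1$ forces $d=2$. As we have already shown $d(D_i)=2$, we conclude $\wt(\bh)=n_a-1$, and that $D_i$ is completely regular with covering radius $\rho=1$, which is exactly the claim. I expect the only point requiring care to be the independence-and-counting argument giving $\dim D_i\geq n_a-1$; everything else is bookkeeping, and the real leverage comes from feeding the systematic generator matrix back into Theorem~\ref{theo:4.1} rather than re-establishing complete regularity by hand.
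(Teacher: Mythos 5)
Your proof is correct and takes essentially the same route as the paper: show $D_i$ is linear of length $n_a$ with $d=2$, use the $n_a-1$ linearly independent weight-two codewords covering a fixed weight-one vector to obtain the dimension and the systematic generator matrix, and then invoke Theorem \ref{theo:4.1} to get complete regularity and $\rho=1$. The paper's proof is simply a terser version of this argument; the only cosmetic difference is that it reads off $G_i=[I|\bh]$ with $\wt(\bh)=n_a-1$ directly from those covering codewords, whereas you reach the same conclusion via an abstract information set together with the $d=1$ versus $d=2$ dichotomy of Theorem \ref{theo:4.1}.
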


\begin{proof}
For any $i=1,\ldots,n/n_a$, it is straightforward to see that $D_i$ is a
linear code of length $n_a$ and minimum distance $d=2$. Moreover,
let $Z$ be the set of weight two codewords covering some fixed
vector of weight one. Then $Z$ is a set of $n_a -1$ linear
independent codewords. Thus, by Theorem \ref{theo:4.1}, code $D_i$ is
completely regular with $\rho=1$.
\end{proof}

Now, it is clear that any linear completely regular code $C$ with
$d=2$ and $\rho=1$ can be `decomposed' into completely regular codes
$D_i$ of type `direct construction'. In order to complete the
classification we need the following technical results.

\begin{lemm}\label{lemm:5.4}
With the same hypothesis as in Lemma \ref{lemm:5.3}, let
$\bx=(x_1,\ldots,x_n)\in C$ and let $X_j$ be one of the sets as in
Lemma \ref{lemm:5.3}, such that $supp(\bx)\cap X_j \neq\emptyset$.
Then there exists a codeword $\bx'=(x'_1\ldots,x'_n)\in C$ which
coincides with $\bx$ in all positions outside of $X_j$,
such that $|supp(\bx')\cap X_j| \leq 1$, and where for the case
$|supp(\bx')\cap X_j| = 1$, the nonzero element of $\bx'$ occur
in any position of $X_j$, i.e. for any $i_j \in X_j$ there is a
such vector $\bx'$ with nonzero element in position $i_j$.
\end{lemm}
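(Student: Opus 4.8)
The plan is to use the partition $X_1,\ldots,X_{n/n_a}$ from Lemma \ref{lemm:5.3} together with the structure of each local code $D_j$ described in Corollary \ref{coro:1}. The key observation is that, by Corollary \ref{coro:1}, the restriction of $C$ to the coordinate positions in $X_j$ is (after deleting the positions outside $X_j$) the completely regular code $D_j$ with generator matrix $[I|\bh]$, where $\bh$ has weight $n_a-1$. This means $D_j$ contains codewords of weight 2 supported on $\{i_1,i_2\}$ for suitable pairs inside $X_j$, and in particular $D_j$ contains weight-2 codewords involving the last coordinate of $X_j$ together with every other coordinate of $X_j$.

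First I would set up the reduction: since the sets $X_1,\ldots,X_{n/n_a}$ partition the coordinate positions and every weight-2 codeword has support inside a single $X_\ell$, the code $C$ is a direct sum, coordinatewise, of the codes $D_\ell$ in the sense that the projection of any codeword of $C$ onto $X_\ell$ is a codeword of $D_\ell$, and conversely any choice of one codeword from each $D_\ell$ glues to a codeword of $C$. I would justify this gluing property from the dimension count: the $D_\ell$ each have dimension $n_a-1$, and $\dim C = k = n - n/n_a \cdot 1 = n - (n/n_a)$, which matches the sum of the $\dim D_\ell$, so $C$ is exactly the direct sum $\bigoplus_\ell D_\ell$ across the blocks. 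Consequently, the coordinates of $\bx$ inside $X_j$ form a codeword of $D_j$ and may be altered independently of the coordinates of $\bx$ in all other blocks $X_\ell$, $\ell\neq j$.

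Next I would work entirely inside the block $D_j$. Let $\by$ denote the projection of $\bx$ onto $X_j$, so $\by$ is a codeword of $D_j$ with nonempty support. Using the generator matrix $G_j=[I|\bh]$ of $D_j$, every codeword of $D_j$ is determined by its first $n_a-1$ entries, and the last entry is forced by the parity check $H_j=[-\bh^t|1]$. Since $\bh$ has weight $n_a-1$, each of the first $n_a-1$ positions of $X_j$ participates nontrivially in this single parity check. Given any target position $i_j\in X_j$, I would exhibit a weight-$\le 1$ codeword of $D_j$ whose support (if nonempty) is exactly $\{i_j\}$: the point is that the dual code is a repetition code by Lemma \ref{lemm:5.2}(v), so the minimum-weight nonzero codewords of $D_j$ are precisely the weight-2 words, and subtracting an appropriate scalar multiple of such weight-2 codewords from $\by$ lets me cancel all but one chosen coordinate. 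Concretely, I would add to $\by$ a suitable $\F_q$-linear combination of the weight-2 generators of $D_j$ so as to zero out every coordinate of $X_j$ except $i_j$; this is possible because the weight-2 codewords covering a fixed weight-one vector span a space of dimension $n_a-1$, exactly the codimension needed, and the parity-check relation from $\bh$ (having full weight $n_a-1$) guarantees no obstruction arises at the chosen position $i_j$. Replacing $\by$ by this adjusted codeword and leaving all other blocks of $\bx$ untouched produces the desired $\bx'$.

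The main obstacle I anticipate is verifying the claim that $i_j$ can be chosen to be any position of $X_j$, rather than some forced position. This is where the hypothesis $\wt(\bh)=n_a-1$ (full weight) in Corollary \ref{coro:1} is essential: because the dual of $D_j$ is a repetition code, every coordinate of $X_j$ is symmetric up to the monomial action, so for any target $i_j$ there is a weight-2 codeword linking $i_j$ to each other coordinate, and these suffice to clear the remaining coordinates one at a time. I would need to argue carefully that the linear system forcing all coordinates except $i_j$ to vanish is always solvable — equivalently, that the $n_a-1$ weight-2 codewords not supported at $i_j$ are linearly independent and span a complement, so that an arbitrary codeword $\by$ can be reduced modulo them to a codeword supported only on $\{i_j\}$. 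Once this solvability is established, the conclusion $|\supp(\bx')\cap X_j|\le 1$ with the freedom to place the single nonzero entry at any prescribed $i_j$ follows immediately.
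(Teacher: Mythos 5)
Your proposal's foundation is false: the claim that $C$ is the direct sum $\bigoplus_\ell D_\ell$ of the block codes fails precisely under the hypothesis $k<n-1$ of Lemma \ref{lemm:5.3}. Your dimension count $k = n - n/n_a$ is wrong: by (\ref{SP-Condition}), $n/n_a = (q^{n-k}-1)/(q-1)$, which equals $n-k$ only when $n-k=1$, i.e. $k=n-1$; for $k<n-1$ one has $\dim C = k > n - n/n_a = \sum_\ell \dim D_\ell$, so $\bigoplus_\ell D_\ell$ is a \emph{proper} subcode of $C$, the missing dimensions being exactly the Hamming-code part that Corollary \ref{coro:8} later identifies. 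Concretely, take $q=2$ and $H=[B\;B]$ with $B$ the parity check matrix of the $[3,1,3]$ Hamming code: then $n=6$, $k=4$, $n_a=2$, the blocks are $X_1=\{1,4\}$, $X_2=\{2,5\}$, $X_3=\{3,6\}$, and the codeword $(1,1,1,0,0,0)\in C$ projects onto $X_1$ as $(1,0)\notin D_1=\{(0,0),(1,1)\}$. So your subsequent assertion that the projection $\by$ of $\bx$ onto $X_j$ lies in $D_j$ is also false, and blocks cannot be altered independently of one another.

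This error is not repairable by patching the dimension count, because it makes your reduction step internally inconsistent: if $\by$ really were in $D_j$, then subtracting weight-2 codewords (elements of $D_j$) keeps you inside $D_j$, and since $d(D_j)=2$ the only element of $D_j$ of weight at most one is ${\bf 0}$ --- so the case $|supp(\bx')\cap X_j|=1$ could never arise, and your argument could not prove exactly the part of the lemma that Proposition \ref{prop:7} actually uses. The correct picture is the opposite of your premise: the restriction $\by$ of a general $\bx\in C$ to $X_j$ lies in a (typically nontrivial) \emph{coset} of $D_j$. The paper's proof is just this: adding to $\bx$ codewords of weight 2 supported in $X_j$ (these span a subcode of $C$ whose restriction to $X_j$ is exactly $D_j$) leaves all coordinates outside $X_j$ unchanged and moves $\by$ freely within $\by + D_j$. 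Since the dual of $D_j$ is a single full-weight parity check $\sum_i c_i y_i$ with all $c_i\neq 0$, the trivial coset contains ${\bf 0}$, while a coset with syndrome $s\neq 0$ contains, for \emph{every} position $i_j\in X_j$, the weight-one vector with value $s/c_{i_j}$ at $i_j$; this gives both alternatives of the lemma, including the free choice of position. Your instinct that the full weight of $\bh$ (the repetition dual) is the essential ingredient is right, but it must be applied to cosets of $D_j$, not to $D_j$ itself.
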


\begin{proof}
%
%
Let $\bx=(x_1,\ldots,x_n)\in C$ and let $X_j$ be such that
$supp(\bx)\cap X_j \neq\emptyset$. Now, adding codewords of weight
$2$ with support only in $X_j$ (see Lemma \ref{lemm:5.3}), from
$\bx$ we easily arrive to $\bx'$, which has either all zero
coordinates on $X_j$, or exactly one nonzero coordinate which might
be placed on any position of  $X_j$.
\end{proof}

\begin{prop}\label{prop:7}
With the same hypothesis as in Lemma \ref{lemm:5.3}, for each
$j=1,\ldots,n/n_a$, take and fix a coordinate position $i_j \in X_j$.
Let $D'$ be the code that
has all codewords in $C$ having their supports contained in
$I=\{i_1,\ldots,i_{n/n_a}\}$. Let $D$ be the code obtained from $D'$
by deleting all coordinates outside of $I$. Then $n/n_a \geq 3$ and
$D$ is a Hamming code of length $n/n_a$.
\end{prop}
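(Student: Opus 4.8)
The plan is to identify $D$ with the quotient of $C$ by its weight-two subcode and then pin $D$ down by the sphere-packing bound. First I would set $m=n-k$ and rewrite the counting identity (\ref{SP-Condition}) as $n/n_a=(q^m-1)/(q-1)$, so that $N:=n/n_a$ is exactly the length of $D$. Since the hypothesis $k<n-1$ of Lemma \ref{lemm:5.3} forces $m\ge 2$, we get $N=1+q+\cdots+q^{m-1}\ge q+1\ge 3$, which already settles the claim $n/n_a\ge 3$. It then remains to show that $D$ is a Hamming code, for which I would determine its length, dimension and minimum distance.

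Next I would control the dimension of $D$. Let $C_0$ be the subcode of $C$ generated by all weight-two codewords. By Lemma \ref{lemm:5.3} every weight-two codeword is supported inside a single block $X_j$, so $C_0$ splits as a direct sum over the $N$ blocks; restricting to $X_j$ reproduces the code $D_j$ of Corollary \ref{coro:1}, which is $[n_a,n_a-1,2]_q$, whence each summand has dimension $n_a-1$ and $\dim C_0=N(n_a-1)$. I claim the inclusion $D'\hookrightarrow C$ followed by the projection $C\to C/C_0$ is an isomorphism. Surjectivity follows by applying Lemma \ref{lemm:5.4} block by block: any $\bx\in C$ can be reduced, by subtracting codewords of $C_0$, to a codeword supported on $I=\{i_1,\dots,i_N\}$, and the reductions in distinct blocks do not interfere since each only alters coordinates inside its own block. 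Injectivity follows because a codeword of $C_0$ supported on $I$ meets each block $X_j$ in at most the single position $i_j$, hence restricts in each block to a codeword of $D_j$ of weight $\le 1$, which vanishes as $d(D_j)=2$. Since the restriction $D'\to D$ merely deletes zero coordinates and is a bijection, we obtain $\dim D=\dim D'=k-N(n_a-1)=N-m$, using $k=Nn_a-m$.

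For the minimum distance of $D$, a weight-one codeword is ruled out by $d(C)=2$, and a weight-two codeword of $D$ would lift to a weight-two codeword of $C$ supported on two positions of $I$ lying in two distinct blocks, contradicting Lemma \ref{lemm:5.3}; thus $d(D)\ge 3$. Finally, $D$ is an $[N,N-m,\ge 3]_q$ code with $N=(q^m-1)/(q-1)$, so $1+N(q-1)=q^m$ and the sphere-packing bound $q^{N-m}\bigl(1+N(q-1)\bigr)\le q^N$ holds with equality. Hence $D$ is a perfect single-error-correcting code, i.e. $d=3$ and $\rho=1$, and its $m\times N$ parity-check matrix has nonzero, pairwise linearly independent columns representing each of the $N$ projective points of $\F_q^m$ exactly once; this is precisely a Hamming code. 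I expect the delicate step to be the identification $D\cong C/C_0$: one must verify that Lemma \ref{lemm:5.4} can be iterated over all blocks (the key point being non-interference of the block reductions) and that no nonzero codeword supported on $I$ survives inside $C_0$, after which the dimension bookkeeping and the sphere-packing argument are routine.
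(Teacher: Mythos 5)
Your proof is correct, but it takes a genuinely different route from the paper's. The paper never computes $\dim D$: after noting that the minimum weight of $D$ is $3$, it proves directly that $D$ has covering radius $1$ by a pointwise transfer argument --- given $\bv$ with $\supp(\bv)=\{i_r,i_s\}$ at distance $2$ from $D$, the hypothesis $\rho(C)=1$ yields $\bx\in C$ at distance $1$ from the extension of $\bv$; Lemma \ref{lemm:5.3} forces $\bx$ to have weight $3$ with its third nonzero coordinate in a third block $X_t$ (this is exactly where the paper obtains $n/n_a\geq 3$), and Lemma \ref{lemm:5.4} slides that coordinate to $i_t$, producing a weight-$3$ codeword of $D$ covering $\bv$, a contradiction; perfectness ($e=\rho=1$) then identifies $D$ as a Hamming code. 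You instead obtain $n/n_a\geq 3$ purely arithmetically from (\ref{SP-Condition}) together with $n-k\geq 2$, compute $\dim D = n/n_a-(n-k)$ exactly via the isomorphism $D\cong C/C_0$ (Lemma \ref{lemm:5.4} gives surjectivity of the block-by-block reduction, $d(D_j)=2$ gives injectivity), and let sphere-packing equality force perfectness. So you invoke $\rho(C)=1$ only through the counting identity (\ref{SP-Condition}), whereas the paper uses it pointwise. The price is the dimension bookkeeping --- including the step, implicit in your appeal to Corollary \ref{coro:1}, that the weight-two codewords supported in $X_j$ span all of $D_j$ (true, since $D_j$ has generator matrix $[I|\bh]$ with $\bh$ of full weight, hence rows of weight $2$) --- and the payoff is an explicit dimension formula for $D$ and a derivation of $n/n_a\geq 3$ requiring no case analysis. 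Both proofs rest on Lemmas \ref{lemm:5.3} and \ref{lemm:5.4}, but put Lemma \ref{lemm:5.4} to quite different use.
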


\begin{proof}
Clearly $D$ is a linear code of length $n/ n_a$. By Lemma
\ref{lemm:5.3}, since we are assuming $k<n-1$, $D$ is not empty and
the minimum weight of $D$ is 3. Thus, we only need to prove that the
covering radius of $D$ is 1. Otherwise, assume that $\bv$ is a
vector (with coordinates in $I$) at distance 2 from $D$. Without
loss of generality, we can assume that $\bv$ has weight 2 with
$supp(\bv)=\{i_r,i_s\}$, ($i_r\in X_r, i_s\in X_s, r\neq s$). The
covering radius of $C$ is $\rho=1$, so we can take $\bx\in C$ at
distance one from $\bv'$, where $\bv'$ is the extension of vector
$\bv$ adding zeroes in all coordinate positions of
$\{1,\ldots,n\}\setminus I$. By Lemma \ref{lemm:5.3}, $\bx$ cannot
have neither weight 2 nor weight 1, since the minimum distance of
$C$ is 2. Thus, $\bx$ is a codeword of weight 3 with
$supp(\bx)=\{i_r,i_s,i\}$. Note that $i$ cannot be in $X_r$ or
$X_s$, otherwise, using Lemma \ref{lemm:5.4} we could obtain a
codeword of weight 2 with support $\{i_r,i_s\}$, contradicting Lemma
\ref{lemm:5.3}. We conclude that $n/n_a \geq 3$. Let $i\in X_t$,
where $r\neq t\neq s$. Again, using Lemma \ref{lemm:5.4}, we can
obtain a codeword $\bx'\in C$ such that
$supp(\bx')=\{i_r,i_s,i_t\}$, $x'_{i_r}=x_{i_r}$ and
$x'_{i_s}=x_{i_s}$. Clearly, $\bx'$ restricted to the $I$
coordinates is a codeword in $D$ of weight 3 and covers $\bv$.
Therefore $\bv$ is not at distance 2 from $D$.
\end{proof}

\begin{coro}\label{coro:8}
Let $C$ be a $[n,k,2]_q$ completely regular code with covering
radius $\rho=1$ and let $n_a$ be the number of codewords at
distance one from any vector not in $C$. Then, either $n_a=n$,
$k=n-1$ and $C$ has generator matrix:
$$
G=[G_1];
$$
or $C$ has generator matrix:
\begin{equation}\label{matgen}
G=\left[
    \begin{array}{ccc}
      G_1 & 0 & 0 \\
      0 & \ddots & 0 \\
      0 & 0 & G_{n/n_a} \\
      \hline
      M_1 & \cdots & M_{n/n_a} \\
    \end{array}
  \right],
\end{equation}
where $G_i$ is a generator matrix of a $[n_a,n_a -1,2]_q$
code (which is completely regular) for all $i=1,\ldots,n/ n_a$,
and $M_i$ has $n_a -1$ zero columns and one column $\bh_i$ such
that
$$
\left[\begin{array}{ccc}
    \bh_1 & \cdots & \bh_{n/ n_a} \\
    \end{array}\right]
$$
is a generator matrix of a Hamming code ${\cal H}$.
\end{coro}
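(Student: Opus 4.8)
The plan is to assemble the conclusion from the preceding lemmas, splitting along the two alternatives of Lemma~\ref{lemm:5.2}. If $n_a=n$, then by the equivalence of (ii) and (iv) in Lemma~\ref{lemm:5.2} we immediately get $k=n-1$ together with a generator matrix $[I|\bh]$ with $\wt(\bh)=n-1$; this is exactly $G=[G_1]$, so nothing further is required. Hence from now on I assume $k<n-1$, which by Lemma~\ref{lemm:5.2} forces $n_a<n$, and I work with the partition $X_1,\ldots,X_{n/n_a}$ supplied by Lemma~\ref{lemm:5.3}.

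Next I would isolate the subcode $C_0$ generated by the weight-$2$ codewords. By Lemma~\ref{lemm:5.3} every weight-$2$ codeword is supported inside a single block $X_i$, so $C_0$ splits as a direct sum of pieces living on the individual blocks; restricting such a piece to $X_i$ gives precisely the code $D_i$ of Corollary~\ref{coro:1}, a $[n_a,n_a-1,2]_q$ completely regular code with generator matrix $G_i=[I|\bh]$. Placing a basis of each $D_i$ on its own block coordinates and zero elsewhere produces the block-diagonal top part of (\ref{matgen}), accounting for $(n/n_a)(n_a-1)=n-n/n_a$ rows.

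The heart of the argument is to recover the remaining rows as a lifted Hamming code. Fix representatives $i_j\in X_j$ and let $D'$ and $D$ be as in Proposition~\ref{prop:7}; that proposition already yields $n/n_a\ge 3$ and that $D$, the restriction of $D'$ to $I=\{i_1,\ldots,i_{n/n_a}\}$, is a Hamming code of length $n/n_a$. I claim $C=C_0\oplus D'$. For the sum, take any $\bx\in C$ and apply Lemma~\ref{lemm:5.4} block by block: adding weight-$2$ codewords supported in $X_j$ (these lie in $C_0$ and alter only the $X_j$-coordinates) I reduce the support of $\bx$ inside each $X_j$ to at most the single position $i_j$, landing in $D'$; thus $\bx\in C_0+D'$. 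For the intersection, a nonzero element of $C_0\cap D'$ would be supported on $I$ yet decompose blockwise into elements of the $D_i$; each block-piece is supported on at most the single position $i_j$, hence would be a word of weight $\le 1$ in the minimum-distance-$2$ code $D_i$ and must vanish, so $C_0\cap D'=\{\bo\}$.

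Finally I would read off the generator matrix. Choosing a basis of $D'$ whose restriction to $I$ is a generator matrix of the Hamming code $D$, each basis vector is supported only on the representative positions, so within block $X_i$ it is zero except at $i_i$; stacking these gives the bottom rows $[M_1\mid\cdots\mid M_{n/n_a}]$, where $M_i$ has $n_a-1$ zero columns and one column $\bh_i$ at position $i_i$, and $[\bh_1\mid\cdots\mid\bh_{n/n_a}]$ is a generator matrix of $D$. The dimension count $\dim C_0+\dim D'=(n-n/n_a)+(n/n_a-(n-k))=k$ confirms these rows form a full generator matrix. I expect the main obstacle to be the direct-sum decomposition $C=C_0\oplus D'$: checking that the blockwise reduction of Lemma~\ref{lemm:5.4} can be performed simultaneously across all blocks without interference (guaranteed because weight-$2$ codewords never cross blocks) and that the lifted Hamming words stay independent of $C_0$ (guaranteed by the minimum distance of the $D_i$).
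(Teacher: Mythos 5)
Your proof is correct, but it reaches the conclusion by a genuinely different route than the paper. The paper's proof goes in the opposite direction: it takes the candidate matrix $G$ of (\ref{matgen}), observes via Corollary \ref{coro:1} and Proposition \ref{prop:7} that the code $C'$ it generates is a \emph{subcode} of $C$, and then forces $C'=C$ by a dimension count --- using equation (\ref{SP-Condition}) to identify the length of the Hamming code as $n/n_a=(q^{n-k}-1)/(q-1)$, hence its dimension as $n/n_a-(n-k)$, so that the number of rows of $G$ equals $k$. Notably, the paper only \emph{asserts} parenthetically that these rows are linearly independent. Your argument instead establishes the direct-sum decomposition $C=C_0\oplus D'$ head-on: the spanning part by applying Lemma \ref{lemm:5.4} block by block (sound, because the weight-$2$ words added for block $X_j$ alter only the $X_j$-coordinates, so the reductions do not interfere), and the trivial intersection by the minimum-distance argument (each block piece of an element of $C_0\cap D'$ is itself a codeword of $C$ of weight at most one, hence zero). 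What your route buys is that both spanning and independence are proved rather than inferred: in particular your intersection argument is exactly the missing justification for the paper's ``which are all linear independent'' claim, and you never need (\ref{SP-Condition}) or the Hamming dimension formula except as a consistency check. What the paper's route buys is brevity, replacing the structural decomposition by arithmetic. One small point you should make explicit: when you say the block pieces of $C_0$ restrict ``precisely'' to the codes $D_i$, this uses that each $D_i$ is generated by its weight-$2$ codewords, which does follow from Corollary \ref{coro:1} since the rows of $G_i=[I|\bh]$ with $\wt(\bh)=n_a-1$ all have weight $2$.
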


\begin{proof}
We have already seen in Theorem \ref{theo:4.1} the case $n_a=n$, $k=n-1$.

Now, let $n_a < n$.
By Corollary \ref{coro:1} and Proposition \ref{prop:7}, it is clear
that code $C'$ generated by $G$ is a subcode of $C$. But, the
number of rows (which are all linear independent) of $G$ is:
$$
\frac{n}{n_a}\cdot (n_a -1) + dim({\cal H}).
$$
Since (\ref{SP-Condition}), the length of ${\cal H}$ is
$$
\frac{n}{n_a}=\frac{q^{n-k}-1}{q-1},
$$
the dimension of ${\cal H}$ is
$(n/n_a)-(n-k)$. Therefore
$$
dim(C')=\frac{n}{n_a}\cdot (n_a -1) +\frac{n}{n_a}-n+k=k.
$$
Hence, $dim(C')=dim(C)$ and consequently $C'=C$.
\end{proof}

\begin{prop}\label{prop:6.5}
Let $C$ be a $[n,k,2]_q$ completely regular code with covering
radius $\rho=1$. Let $n_a$ be the number of codewords at distance
one from any vector not in $C$. Let $A$ be a generator matrix of a repetition $[n_a,1,n_a]$-code
and let $B$ be a parity check matrix for a Hamming $q$-ary code of length $n_b=n/n_a$.
\begin{itemize}
\item[(i)] If $k=n-1$, then code $C$ is equivalent
to a code with parity check matrix $H=A$.
\item[(ii)] If $k<n-1$, then $C$ is equivalent
to a code with parity check matrix $H=A\otimes B$.
\end{itemize}
\end{prop}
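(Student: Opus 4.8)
The plan is to reduce both parts to the structural results already established, essentially by identifying parity check matrices and then invoking the Kronecker product description. The two cases are genuinely different in difficulty: part (i) is immediate, while part (ii) requires matching up the block structure of the generator matrix from Corollary~\ref{coro:8} with the Kronecker product $A \otimes B$.

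First I would dispose of part (i). When $k = n-1$, Lemma~\ref{lemm:5.2} (equivalence of (ii), (iii) and (v)) tells us that $n_a = n$ and that the dual code of $C$ is equivalent to a repetition $[n,1,n]_q$ code. Since $A$ is precisely a generator matrix of that repetition code, $A$ is a generator matrix for $C^\perp$, hence a parity check matrix for $C$. Thus $C$ is equivalent to a code with parity check matrix $H = A$, as claimed.

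For part (ii), assume $k < n-1$. By Corollary~\ref{coro:8}, after a coordinate permutation $C$ has a generator matrix of the block form \eqref{matgen}, where each $G_i = [\,I \mid \bh_i\,]$ generates a $[n_a, n_a-1, 2]_q$ direct-construction code and the columns $\bh_1, \ldots, \bh_{n/n_a}$ assemble into a generator matrix of a Hamming code $\mathcal H$ of length $n_b = n/n_a$. The idea is to compute a parity check matrix from this generator matrix and show it is monomially equivalent to $A \otimes B$. Recall that $A \otimes B$ has the block form $[\,B \; B \; \cdots \; B\,]$ with $n_a$ copies of the Hamming parity check matrix $B$; by Theorem~\ref{theo:1} the code with this parity check matrix has the same parameters $[n, n-m_b, 2]_q$ and $\rho = 1$ as $C$, where $m_b = n - k$. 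So it suffices to verify that the rows of $A \otimes B$ annihilate every row of $G$, since both codes have the same dimension and a containment of codes of equal dimension is an equality.

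The main obstacle, and the only step requiring real care, is the orthogonality verification, i.e. checking that $(A \otimes B)\, G^t = 0$ after suitably identifying coordinates. The diagonal blocks $G_i = [\,I \mid \bh_i\,]$ have weight-two rows supported entirely within a single $n_a$-set $X_i$; when paired against a single copy of $B$ these produce the defining relations of the repetition structure within each block, which is where the repetition generator $A$ enters, forcing the weight-two codewords to be annihilated. The bottom rows $[\,M_1 \; \cdots \; M_{n/n_a}\,]$, each $M_i$ carrying the single column $\bh_i$, encode exactly the Hamming relations among the blocks, which is where the matrix $B$ enters. I would make these two orthogonality computations precise and then conclude, by the equality-of-dimensions argument above, that $C$ is equivalent to the code with parity check matrix $H = A \otimes B$. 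This simultaneously closes the classification, since Theorem~\ref{theo:1} already proved the converse direction.
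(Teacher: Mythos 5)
Your part (i) is correct and is in substance the paper's argument: by Lemma \ref{lemm:5.2}, $k=n-1$ forces $n_a=n$ and the dual of $C$ to be equivalent to the repetition $[n,1,n]_q$ code, which is exactly the claim.

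Part (ii) follows the same overall strategy as the paper (orthogonality of the Corollary \ref{coro:8} generator matrix against copies of $B$, then a dimension count), but it has a genuine gap at precisely the step you call the orthogonality verification. You propose to check $(A\otimes B)G^t=0$ for the matrix $G$ of (\ref{matgen}) ``after suitably identifying coordinates'', i.e.\ after a coordinate permutation. For $q>2$ this is false in general. Corollary \ref{coro:8} only guarantees that each diagonal block $G_i$ generates \emph{some} $[n_a,n_a-1,2]_q$ code; take $q=3$, $n_a=2$ and a block $G_i=[1\,|\,1]$. A weight-two row with nonzero entries $1$ and $h$ is orthogonal to every row of $A\otimes B$ only if the two columns of $A\otimes B$ under its support satisfy $c_1=-h\,c_2$; since every column of $A\otimes B$ is a column of $B$ and distinct columns of $B$ are pairwise non-proportional, this forces the two positions to carry the \emph{same} column of $B$ and $h=-1$. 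With $h=1\neq -1$ in $\F_3$, no coordinate permutation whatsoever makes $G$ orthogonal to $A\otimes B$, so the containment feeding your equal-dimension argument fails. What is missing is the monomial part of the equivalence, which is exactly the normalization step in the paper's proof: first multiply suitable rows and then suitable columns by nonzero scalars so that every diagonal block becomes a common $G'=[I\,|\,\bh]$ with $\bh=(q-1,\ldots,q-1)^t$, and normalize the position of the nonzero column of each $M_i$ (checking that the columns $\bh_i$ still generate a Hamming code). Only after this normalization do your two orthogonality computations and the dimension count go through, and then they coincide with the paper's proof; note this scaling is also where the word ``equivalent'' in the statement does real work. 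A minor symptom of the same issue: you write $G_i=[I\,|\,\bh_i]$ and also say the $\bh_i$ assemble into a generator matrix of ${\cal H}$, but in Corollary \ref{coro:8} the $\bh_i$ are the nonzero columns of the $M_i$, not the last columns of the $G_i$; these are different objects, and conflating them hides the need for the normalization.
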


\begin{proof}
If $k=n-1$, by Corollary \ref{coro:8}, code $C$ is given by a
generator matrix of a $[n_a,n_a-1,2]_q$ code. A parity check matrix
for an equivalent code to $C$ is the generator matrix of a
repetition $[n_a,1,n_a]$-code.

If $k<n-1$,
we can start with a generator matrix as in (\ref{matgen}).
Then, we multiply the first $n_a(n_a
-1)$ rows by appropriate values. After, we can multiply the columns
to obtain the following generator matrix:
\begin{equation}
G=\left[
    \begin{array}{ccc}
      G' & 0 & 0 \\
      0 & \ddots & 0 \\
      0 & 0 & G' \\
      \hline
      M_1 & \cdots & M_{n/n_a} \\
    \end{array}
  \right],
\end{equation}
where $G'$ is a $(n_a -1) \times n_a$ matrix
$$G'~=~[I|\bh].
$$
Up to equivalence, we can assume that $\bh$ has the value $q-1$ in
all its entries and $M_1,\ldots,M_{n/n_a}$ are as in Corollary
\ref{coro:8}. We also assume that the nonzero column of each $M_i$
is the first one.

Finally, we can permute the columns of the matrix
$$
H=[B\;B\;\cdots\;B]
$$
to obtain the matrix
$$
H'=[B_1\;B_2\;\cdots\;B_{n/n_a}],
$$
where $B_i$ has all its columns equal to the $i$-th column of $B$.
It is straightforward to see that $G$ and $H'$ are orthogonal matrices.
\end{proof}

Finally, we summarize the main result of this paper.

\begin{theo}
Let $C$ be a $[n,k,d]_q$ completely regular code with covering
radius $\rho=1$. Let $A$ be a generator matrix for the repetition
$[n_a,1,n_a]_q$ code of length $n_a$ and let $B$ be a parity check
matrix of a Hamming code with parameters $[n_b,k_b,3]_q$, where
$n=n_an_b$, $n_b=(q^{m_b}-1)/(q-1)$, $k_b=n_b-m_b$.
\begin{itemize}
\item[(i)] If $d=1$, then $C$ is the $q$-repeated code of a completely regular code $C'$ with covering
radius $\rho'=1$ and minimum distance $d'\in\{1,2\}$.
\item[(ii)] If $d=2$, then $n_a>1$ and $C$ is equivalent to a code with parity check matrix $H=A$ or $H=A\oplus B$.
\item[(iii)] If $d=3$, then $n_a=1$ and $C$ is a Hamming code and $H=B$ is a parity check matrix for $C$.
\item[(iv)] $C$ is a completely transitive code.
\end{itemize}
\end{theo}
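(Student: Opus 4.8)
The statement is a synthesis of the three sections, so the plan is to dispatch each value of $d$ by quoting the construction already obtained for it and then to establish complete transitivity (iv) case by case. Since $e\le\rho=1$, and since a code attaining $e=\rho$ is perfect while a perfect code has odd minimum distance $2e+1$, the value $d=4$ is excluded; hence $d\in\{1,2,3\}$ and the three itemized cases are exhaustive. For $d=3$ we have $e=\rho=1$, so $C$ is a linear perfect single-error-correcting code, that is a Hamming code; this forces $n_a=1$, $n=n_b$ and $H=B$, which is (iii). Lemma~\ref{HammingTrans} gives that $\Aut(C)$ is transitive, and Lemma~\ref{TransitiveCT} then makes $C$ completely transitive.

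For $d=2$, Proposition~\ref{prop:6.5} already shows that $C$ is equivalent to a code whose parity check matrix is $H=A$ (the case $k=n-1$) or $H=A\otimes B$ (the case $k<n-1$), with $n_a\ge 2$ in both; this is (ii). Complete transitivity of the $A\otimes B$ codes is exactly Theorem~\ref{theo:1}(iii). For the remaining matrix $H=A$, where $A=[1\;1\;\cdots\;1]$ and $C$ is the subspace of vectors with zero coordinate sum, transitivity of $\Aut(C)$ is immediate: a coordinate permutation preserves $C$, a scalar matrix $\lambda I$ preserves every linear code, and composing them carries any weight-one vector onto any other. Transitivity is an equivalence invariant by Lemma~\ref{Equivalents}, so it holds for $C$, and Lemma~\ref{TransitiveCT} upgrades it to complete transitivity.

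The case $d=1$ is reduced by the $q$-repeated construction: Corollary~\ref{Casd1} together with Theorem~\ref{theo:3.1} expresses $C$ as a $q$-repeated code built from a completely regular code of smaller length and the same covering radius $\rho'=1$, which gives (i). The one point not handed to us by an earlier result, and the main obstacle, is complete transitivity here, because Lemma~\ref{TransitiveCT} cannot be invoked: a $q$-repeated code has a free coordinate whose weight-one vectors are themselves codewords, so $\Aut(C)$ is never transitive on all weight-one vectors. I would instead verify complete transitivity directly from its definition. Peeling off all free coordinates, write $C$ as repeatedly built from a code $D$ with $d_D>1$ (hence a $d=2$ or $d=3$ code, for which $\Aut(D)$ is already transitive on weight-one vectors). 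Because $\rho=1$, every nonzero coset of $C$ has a weight-one leader; a leader on a free coordinate lies in $C$, so each nontrivial coset is represented by a weight-one vector supported on the $D$-coordinates. Each $\phi\in\Aut(D)$ extends to an automorphism of $C$ fixing the free coordinates, and transitivity of $\Aut(D)$ on such weight-one vectors moves any nontrivial coset to any other. Hence $\Aut(C)$ has exactly two orbits on cosets, namely $\{C\}$ and the set of all nontrivial cosets, which is precisely complete transitivity for $\rho=1$. Collecting the three cases establishes (iv).
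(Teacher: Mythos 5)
Your proposal is correct and follows essentially the same route as the paper: (i)--(iii) are dispatched by citing Corollary~\ref{Casd1}, Proposition~\ref{prop:6.5} and perfectness, and (iv) is handled for $d\in\{2,3\}$ via transitivity of the automorphism group together with Lemmas~\ref{Equivalents} and~\ref{TransitiveCT}, and for $d=1$ by extending automorphisms of the reduced code $D$ to $C$ so that all weight-one coset representatives (hence all nontrivial cosets) lie in a single orbit, which is exactly the paper's argument. Your two additions --- the explicit exclusion of $d=4$ via the perfect-code argument, and the direct transitivity check for the code with parity check matrix $H=A$, a degenerate case the paper silently subsumes under Theorem~\ref{theo:1} --- are harmless refinements rather than a different approach.
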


\begin{proof}
We know that $d\in\{1,2,3\}$. We separate these three cases:
\begin{itemize}
\item[(i)] We have proven this statement in Corollary \ref{Casd1}.
\item[(ii)] This is proven in Proposition \ref{prop:6.5}.
\item[(iii)] Obvious, since $C$ is a perfect code.
\item[(iv)] If $d\in\{2,3\}$, by Proposition \ref{prop:6.5} and Theorem~\ref{theo:1}, $C$ is equivalent to a code $C'$ such that
$\Aut(C')$ is transitive. Thus, by Lemma \ref{Equivalents}, $\Aut(C)$ is transitive and, by Lemma \ref{TransitiveCT},
$C$ is completely transitive.

If $d=1$, then let $D$ be the `reduced' code, that is, the code
obtained from $C$ by doing the reverse operation of the $q$-repeated
code construction. Since the covering radius of $C$ and $D$ is 1, we
have that $C\neq \F^n_q$ and $D$ is a completely regular code with
$d>1$ by Theorem \ref{theo:3.1}. Therefore $D$ is a completely
transitive code. This means that we can choose a set of $q^{n-k}-1$
coset leaders of weight one such that they are in the same orbit of
$\Aut(D)$. But $C$ has the same number of cosets and we can choose
the same coset leaders. Since, clearly, $\Aut(D)\subseteq \Aut(C)$,
we have that these coset leaders are in the same orbit. Therefore,
all cosets different of $C$ are in the same orbit and $C$ is a
completely transitive code.
\end{itemize}
\end{proof}

\end{document}